\newcommand{\rle}[1]{\mathsf{rle}(#1)}
\newcommand{\dtw}{\mathsf{dtw}}
\newcommand{\ddtw}{\mathbf{D^2TW}}
\newcommand{\Tlist}{\Delta_{\mathrm{T}}}
\newcommand{\Llist}{\Delta_{\mathrm{L}}}
\newcommand{\Blist}{\Delta_{\mathrm{B}}}
\newcommand{\Rlist}{\Delta_{\mathrm{R}}}
\newcommand{\changed}{\#_{\mathrm{chg}}}
\newcommand{\cost}{\mathsf{w}}
\newcommand{\ITOP}{i_{\mathrm{T}}^I}
\newcommand{\IBOT}{i_{\mathrm{B}}^I}
\newcommand{\JRIGHT}{j_{\mathrm{R}}^J}
\newcommand{\JLEFT}{j_{\mathrm{L}}^J}
\newcommand{\D}{\mathit{D}}
\newcommand{\B}{\mathcal{B}}
\newcommand{\Dp}{\mathit{D^\prime}}
\newcommand{\Bp}{\mathit{B^\prime}}
\newcommand{\DR}{\mathit{DR}}
\newcommand{\DRp}{\mathit{DR^\prime}}
\newcommand{\DS}{\mathit{DS}}
\newcommand{\DSp}{\mathit{DS^\prime}}
\newtheorem{theorem}{Theorem}
\newtheorem{lemma}{Lemma}
\newtheorem{corollary}{Corollary}
\newenvironment{proof}{\begin{trivlist} \item{\itshape Proof.}}{\end{trivlist}}
\newcommand{\qed}{\hfill $\Box$}
\title{Towards Efficient Interactive Computation of \\
  Dynamic Time Warping Distance}
\author{Akihiro~Nishi$^1$}
\author{Yuto~Nakashima$^1$}
\author{Shunsuke~Inenaga$^{1,2}$}
\author{Hideo~Bannai$^3$}
\author{Masayuki~Takeda$^1$}
\affil{
  \textit{$^1$ Department of Informatics, Kyushu University, Fukuoka, Japan}\\
  \textit{$^2$ PRESTO, Japan Science and Technology Agency, Kawaguchi, Japan}\\
  \textit{$^3$ M\&D Data Science Center, Tokyo Medical and Dental University, Tokyo, Japan}
}
\date{}
\begin{document}
\maketitle

\begin{abstract}
  The \emph{dynamic time warping} (\emph{DTW}) is a widely-used
  method that allows us to efficiently compare two time series that can vary in speed.
  Given two strings $A$ and $B$ of respective lengths $m$ and $n$,
  there is a fundamental dynamic programming algorithm that computes
  the DTW distance $\dtw(A,B)$ for $A$ and $B$
  together with an optimal alignment in $\Theta(mn)$ time and space.
  In this paper, we tackle the problem of interactive computation
  of the DTW distance for dynamic strings,
  denoted $\ddtw$, where character-wise edit operation
  (insertion, deletion, substitution) can be performed at
  an \emph{arbitrary} position of the strings.
  Let $M$ and $N$ be the sizes of the \emph{run-length encoding} (\emph{RLE})
  of $A$ and $B$, respectively.
  We present an algorithm for $\ddtw$ that occupies $\Theta(mN+nM)$ space
  and uses $O(m+n+\changed) \subseteq O(mN + nM)$ time to update
  a compact differential representation $\DS$ of the DP table per edit operation,
  where $\changed$ denotes the number of cells in $\DS$ whose values
  change after the edit operation.
  Our method is at least as efficient as
  the algorithm recently proposed by Froese et al.
  running in $\Theta(mN + nM)$ time,
  and is faster when $\changed$ is smaller than $O(mN + nM)$
  which, as our preliminary experiments suggest,
  is likely to be the case in the majority of instances.
\end{abstract}


\section{Introduction}
The \emph{dynamic time warping} (\emph{DTW}) is a classical and widely-used
method that allows us to efficiently compare two temporal sequences or time series that can vary in speed.
A fundamental dynamic programming algorithm computes
the DTW distance $\dtw(A,B)$ for two strings $A$ and $B$
together with an optimal alignment
in $\Theta(mn)$ time and space~\cite{Sakoe1978}, where $|A| = m$ and $|B| = n$.
This algorithm allows one to update the DP table $\D$ for $\dtw(A,B)$
in $O(m)$ time (resp. $O(n)$ time) when a new character is appended to $B$ (resp. to $A$).

In this paper, we introduce the ``dynamic'' DTW problem,
denoted $\ddtw$, where character-wise edit operation
(insertion, deletion, substitution) can be performed at
an \emph{arbitrary} position of the strings.
More formally, we wish to maintain a (space-efficient) representation of $\D$
that can dynamically be modified according to a given operation.
This representation should be able to quickly answer the value of $\D[m,n] = \dtw(A, B)$
upon query,
together with an optimal alignment achieving $\dtw(A, B)$. 
This kind of interactive computation for (a representation of) $\D$
can be of practical merits, e.g. when simulating stock charts, or editing musical sequences.
Another example of applications of $\ddtw$ is a sliding window version of DTW
which computes $\dtw(A, B[j..j+d-1])$ between $A$ and
every substring $B[j..j+d-1]$ of $B$ of arbitrarily fixed length $d$.

\emph{Incremental/decremental} computation of a DP table
is a restricted version of the aforementioned interactive computation,
which allows for prepending a new character to $B$,
and/or deleting the leftmost character from $B$.
A number of incremental/decremental computation algorithms
have been proposed for the \emph{unit-cost edit distance} and \emph{weighted edit distance}:
Kim and Park~\cite{kim_park_jda2004} showed
an incremental/decremental algorithm for the unit-cost edit distance
that occupies $\Theta(mn)$ space and runs in $O(m+n)$ time per operation.
Hyyr\"o et al.~\cite{HyyroNI10} proposed 
an algorithm for the edit distance with integer weights
which uses $\Theta(mn)$ space and runs in $O(\min\{c(m+n), mn\})$ time per operation,
where $c$ is the maximum weight in the cost function.
This translates into $O(m + n)$ time under constant weights.
Schmidt~\cite{schmidt98} gave an algorithm that uses $\Theta(mn)$ space
and runs in $O(n \log m)$ time per operation for a general weighted edit distance.
Hyyr\"o and Inenaga~\cite{HyyroI16} presented
a space efficient alternative to incremental/decremental
unit-cost edit distance computation which runs in $O(m+n)$ time per operation
but uses only $\Theta(mN + nM)$ space, 
where $M$ and $N$ are the sizes of \emph{run-length encoding} (RLE) of $A$ and $B$, respectively.
Since $M \leq m$ and $N \leq n$ always hold,
the $mN + nM$ terms can be much smaller than the $mn$ term
for strings that contain many long character runs.
Later, Hyyr\"o and Inenaga~\cite{HyyroI18} presented a space-efficient alternative
for edit distance with integer weights,
which runs in $O(\min\{c(m+n),mn\})$ time per operation and requires $\Theta(mN + nM)$ space.

Fully-dynamic interactive computation for
the (weighted) edit distance was also considered:
Let $j^*$ be the position in $B$ where the modification
has been performed.
For the unit cost edit distance,
Hyyr\"o et al.~\cite{Hyyro2015} presented a representation of the DP table 
which uses $\Theta(mn)$ space and can be updated in $O(\min\{rc(m+n),mn\})$ time per operation, 
where $r = \min\{j^*, n-j^*+1\}$ and
$c$ is the maximum weight.
They also showed that there exist instances
that require $\Omega(\min\{rc(m+n),mn\})$ time to 
update their data structure per operation.
Very recently, Charalampopoulos et al.~\cite{CharalampopoulosKM20}
showed how to maintain
an optimal (weighted) alignment of two fully-dynamic strings
in $\tilde{O}(n \min\{\sqrt{n}, c\})$ time per operation,
where $m = n$.

While computing \emph{longest common subsequence} (\emph{LCS})
and weighted edit distance of strings of length $n$ 
can both be reduced to computing DTW of strings of length $O(n)$~\cite{AbboudBW15,Kuszmaul19},
a reduction to the other direction is not known.
It thus seems difficult to directly apply any of the aforementioned algorithms
to our $\ddtw$ problem.
Also, a conditional lower bound suggests
that strongly sub-quadratic DTW algorithms are unlikely to exist~\cite{AbboudBW15,BringmannK15}.
Thus, any method that recomputes the na\"ve DP table $\D$ from scratch should take
almost quadratic time per update.

\vspace*{0.5pc}
\noindent \textbf{Our contribution.}
This paper takes the first step towards an efficient solution to $\ddtw$.
Namely, we present an algorithm for $\ddtw$ that occupies $\Theta(mN+nM)$ space
and uses $O(m+n+\changed)$ time to update
a compact differential representation $\DS$ for the DP table $\D$
per edit operation,
where $\changed$ denotes the number of cells in $\DS$ whose values
change after the edit operation.
Since $\changed = O(mN+nM)$ always holds,
our method is always at least as efficient as the na\"ive method
that recomputes the full DP table $\D$ in $\Theta(mn)$ time,
or the algorithm of Froese et al.~\cite{FroeseJRW2020} that recomputes
another sparse representation of $\D$ in $\Theta(mN + nM)$ time.
While there exist worst-case instances that give $\changed = \Omega(mN+nM)$,
our preliminary experiments suggest that, in many cases,
$\changed$ can be much smaller than the size of $\DS$ which is $\Theta(mN + nM)$.

Technically  our algorithm is most related to 
Hyyr\"o et al.'s method~\cite{HyyroNI10,Hyyro2015} and
Froese et al.'s method~\cite{FroeseJRW2020},
but our algorithm is not straightforward from these.

\section{Preliminaries} \label{sec:prelim}

We consider sequences (strings) of characters
from an alphabet $\Sigma$ of real numbers.
Let $A = a_1, \ldots, a_m$ be a string consisting of $m$ characters
from $\Sigma$.
The {\em run-length encoding} $\rle{A}$ of string $A$ is
a compact representation of $A$ such that
each maximal run of the same characters in $A$ is represented
by a pair of the character and the length of the run.
More formally, let $\mathbb{N}$ denote the set of positive integers.
For any non-empty string $A$,
$\rle{A} = a_1^{e_1} \cdots a_M^{e_M}$,
where $a_I \in \Sigma$ and $e_I \in \mathbb{N}$ for any $1 \leq I \leq M$,
and $a_I \neq a_{I+1}$ for any $1 \leq I < M$.
Each $a_I^{e_I}$ in $\rle{A}$ is called a (character) \emph{run},
and $e_I$ is called the exponent of this run.
The \emph{size} of $\rle{A}$ is the number $M$ of runs in $\rle{A}$.
E.g., for string $A = \mathtt{aacccccccbbabbbb}$ of length 16,
$\rle{A} = \mathtt{a}^2 \mathtt{c}^7 \mathtt{b}^2 \mathtt{a}^1 \mathtt{b}^4$ and its size is 5.

\emph{Dynamic time warping} (\emph{DTW})
is a commonly used method to compare two temporal sequences
that may vary in speed.
Consider two strings $A = a_1, \ldots, a_m$ and
$B = b_1, \ldots, b_n$.
To formally define the DTW for $A$ and $B$,
we consider an $m \times n$ grid graph $\mathcal{G}_{m, n}$ such that 
each vertex $(i, j)$ has (at most) three directed edges;
one to the lower neighbor $(i+1, j)$ (if it exists),
one to the right neighbor $(i, j+1)$ (if it exists),
and one to the lower-right neighbor $(i+1, j+1)$ (if it exists).
A path in $\mathcal{G}_{m,n}$ that starts from vertex $(1, 1)$ and ends at vertex $(m, n)$
is called a \emph{warping path},
and is denoted by
a sequence $(1, 1), \ldots, (i, j), \ldots, (m, n)$ of adjacent vertices.
Let $\mathcal{P}_{m, n}$ be the set of all warping paths in $\mathcal{G}_{m, n}$.
Note that each warping path in $\mathcal{P}_{m, n}$ corresponds to
an alignment of $A$ and $B$.
The DTW for strings $A$ and $B$, denoted $\dtw(A, B)$, is defined by
$ \dtw(A, B) = \min_{p \in \mathcal{P}_{m, n}} \sqrt{\sum_{(i, j) \in p}(a_i - b_j)^2}$.

The fundamental $\Theta(mn)$-time and space solution
for computing $\dtw(A,B)$, given in~\cite{Sakoe1978},
fills an $m \times n$ dynamic programming table $\D$ such that
$\D[i,j] = \dtw(A[1..i],B[1..j])^2$ for $1 \leq i \leq m$ and $1 \leq j \leq n$.
Therefore, after all the cells of $\D$ are filled,
the desired result $\dtw(A,B)$ can be obtained by $\sqrt{\D[m,n]}$.
The value for each cell $\D[i,j]$ is computed
by the following well-known recurrence:
\begin{equation}
\begin{array}{lll}  
  \D[1,1] & = & (a_1 - b_1)^2, \\
  \D[i,1] & = & \D[i-1,1] + (a_i - b_1)^2 \quad \text{for }  1 < i \leq m, \\
  \D[1,j] & = & \D[1,j-1] + (a_1 - b_j)^2 \quad \text{for }  1 < j \leq n, \\
  \D[i,j] & = & 
      \begin{array}[t]{l}
       \min\{\D[i,j-1], \D[i-1,j], \D[i-1,j-1] \} + (a_i-b_j)^2\\
       \ \text{for $1 < i \leq m$ and $1 < j \leq n$.}
      \end{array}
\end{array} \label{recurrence}
\end{equation}

In the rest of this paper, we will consider the problem of maintaining 
a representation for $\D$,
each time one of the strings, $B$, is dynamically modified
by an edit operation (i.e. single character insertion, deletion, or substitution)
on an \emph{arbitrary} position in $B$.
We call this kind of interactive computation of $\dtw(A,B)$ as
the \emph{dynamic} DTW computation, denoted by $\ddtw$.

Let $\Bp$ denote the string after an edit operation is performed on $B$,
and $\Dp$ denote the dynamic programming table $\D$ after it has been updated to correspond to $\dtw(A, \Bp)$.
In a special case where the edit operation is performed at the right end of $B$,
where we have $\Bp=Bc$ (insertion),
$\Bp=B[1..n-1]$ (deletion) or $\Bp=B[1..n-1]c$ (substitution) with
a character $c \in \Sigma$,
then $\D$ can easily be updated to $\Dp$ in $O(m)$ time by simply computing
a single column at index $j = n$ or $j = n+1$ using recurrence~(\ref{recurrence}).

\begin{figure}[t]
    \centering
    \includegraphics[clip,scale=0.5,bb=0 0 581 222]{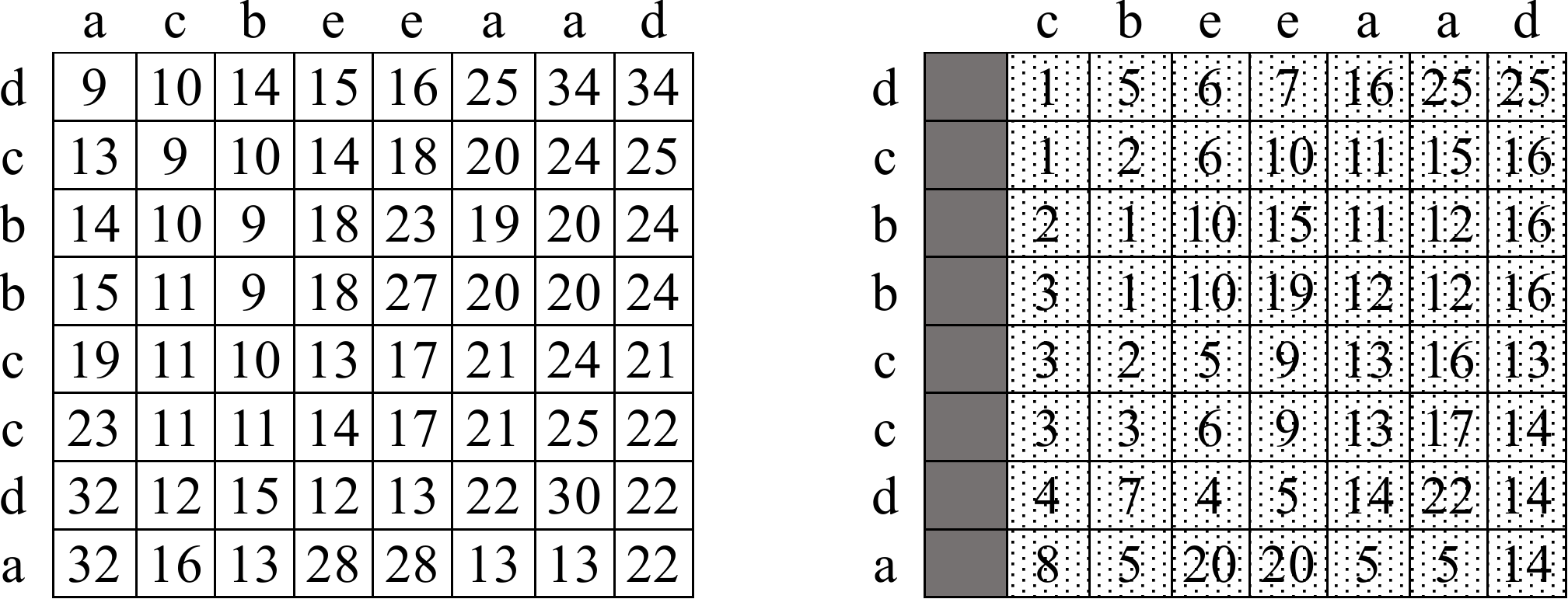}
    \caption{In this example
      where $A = \textrm{dcbbccda}$ and $B = \textrm{acbeeaad}$,
      the values of $\Theta(mn)$ cells of the DP table for $\dtw(A,B)$ change after the edit operation on $B$ (here, the first character $B[1] = \mathrm{a}$ of $B$ was deleted).}
    \label{fig:worst_case}
\end{figure}

As in Figure~\ref{fig:worst_case},
in the worst case, the values of $\Theta(mn)$ cells of the DP table for $\dtw(A,B)$ can change after an edit on $B$.
The following lemma gives a stronger statement
that updating $\D$ to $\Dp$ in our $\ddtw$ scenario cannot be amortized:
\begin{lemma} \label{lem:updating_DP-table_worst-case}
  There are strings $A$, $B$ and a sequence of
  $k$ edits on $B$ such that
  $\Theta(kmn)$ cells in $\Dp$ have different values in the corresponding cells in $\D$.
\end{lemma}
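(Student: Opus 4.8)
The goal is to exhibit a single pair of strings $A, B$ together with a sequence of $k$ edit operations on $B$ so that each individual edit changes the values of $\Theta(mn)$ cells, and these changes cannot be amortized away — meaning the total number of changed cells across all $k$ edits is $\Theta(kmn)$. The plan is to take the worst-case construction already illustrated in Figure~\ref{fig:worst_case} (where a single deletion at the left end of $B$ forces $\Theta(mn)$ cells to change) and iterate it $k$ times in a way that restores the configuration after every pair of operations, so that the amortized cost per operation stays $\Theta(mn)$ rather than collapsing.

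First I would fix explicit families of strings. A natural choice is to let $A$ be a string whose characters force a ``staircase'' structure in the DP table — for instance alternating or generic characters chosen so that the warping paths realizing $\D[i,j]$ depend sensitively on the alignment of $B$'s left boundary — and to let $B$ have a short volatile prefix that can be repeatedly edited. The key is that deleting $B[1]$ shifts the effective column indexing of essentially the whole table by one, and because of the $(a_i - b_j)^2$ local costs this shift propagates into $\Theta(mn)$ cells (this is exactly the phenomenon of Figure~\ref{fig:worst_case}). I would then choose the edit sequence to \emph{oscillate}: e.g. alternately delete the leftmost character and re-insert a character at the left end, so that $B$ returns (up to the controlled prefix) to a string with the same worst-case property. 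The point of oscillating is that after each operation the table is again in a ``maximally sensitive'' state, so the next operation once more disturbs $\Theta(mn)$ cells; summing over $k$ operations yields $\Theta(kmn)$ changed cells in total.

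Second, for each operation I would need to verify the lower bound $\Omega(mn)$ on the number of changed cells, not merely the upper bound. This is the technical heart: it is not enough to say ``many cells could change'', I must prove that a constant fraction of the $mn$ cells genuinely take different values before and after the edit. I would do this by comparing, for a large (constant-fraction) set of index pairs $(i,j)$, the value $\D[i,j] = \dtw(A[1..i], B[1..j])^2$ against $\Dp[i,j] = \dtw(A[1..i], \Bp[1..j])^2$ and showing they differ. With the characters of $A$ and $B$ chosen as distinct reals (or with a carefully separated cost landscape), I can argue that the optimal warping path, and hence its squared cost, changes by a nonzero amount for each such $(i,j)$; choosing $A$'s alphabet to be numerically ``generic'' (pairwise distinct values with no accidental coincidences) lets me guarantee strict inequality rather than a tie, which is what rules out a cell's value staying the same by coincidence.

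The hard part will be precisely this genericity-and-propagation argument for the lower bound: establishing that the boundary shift induced by one edit actually alters $\Theta(mn)$ cell values simultaneously, and — crucially — that this remains true at \emph{every} step of the oscillating sequence rather than the table settling into some fixed point after the first operation. I would handle it by a careful choice of the alphabet values (so that all relevant squared-distance sums are distinct) and by an inductive invariant asserting that after each edit the table is in one of two mutually worst-case configurations that map to each other under the two edit types. Once the single-step $\Omega(mn)$ bound and the invariant are in place, summation over the $k$ edits gives the claimed $\Theta(kmn)$ total, completing the proof.
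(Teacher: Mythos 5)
Your overall architecture (a single-edit $\Omega(mn)$ lower bound, iterated $k$ times) matches the shape of what is needed, and your oscillation idea for the iteration is in principle sound: once you know that deleting $B[1]$ changes $\Omega(mn)$ cells of $\D$, re-inserting the same character simply toggles back to the original table, so each of the $k$ alternating edits changes the same $\Omega(mn)$ cells and no inductive invariant is actually required (your worry about the table ``settling into a fixed point'' is vacuous for a pure delete/re-insert toggle). The genuine gap is the single-edit lower bound itself, which is the entire technical content of the lemma and which you explicitly defer: you give neither a concrete scalable family of strings nor the argument that $\D[i,j] \neq \Dp[i,j]$ holds for $\Omega(mn)$ index pairs. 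Appealing to Figure~\ref{fig:worst_case} does not fill this hole; that figure is one finite example, not a construction with a proof valid for all $m,n$. Moreover, ``genericity'' as you invoke it (``pairwise distinct values with no accidental coincidences'') is not yet an argument. To conclude that two minima of sums of squared differences are generically unequal, you must first exhibit a structural reason why they are minima over \emph{formally distinct} families of polynomial expressions --- for instance, that every warping path contributing to $\D[i,j]$ passes through cell $(1,1)$ and hence its cost depends on $b_1$, whereas after deleting $B[1]$ no contributing path involves $b_1$ at all. Nothing in your text identifies such a reason, and without it genericity cannot rule out coincidence of the two values.

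The paper avoids genericity entirely and argues explicitly: it takes $A$ strictly decreasing, $B$ strictly increasing, with $A[m] > B[n]$ (the construction of Theorem~\ref{theo:worst_case_changed} specialized to run length one), proves via Lemma~\ref{lem:path_min} that the optimal warping paths are then completely determined (diagonal steps followed by a straight run), and computes the effect of deleting $B[1]$ on each relevant cell as an explicit difference of two edge weights, $\cost(e_3) - \cost(e_1) \neq 0$, which is nonzero because adjacent characters of $A$ differ. The iteration is also handled differently: instead of oscillating, the paper repeatedly prepends characters smaller than every current character of $B$, which preserves the monotone worst-case configuration so that the same explicit argument applies at every one of the $k$ steps. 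To repair your proposal you would need either to carry out the genericity argument properly (the $b_1$-dependence observation above is the missing ingredient that makes it workable) or to replace your vague ``staircase'' strings with an explicit construction and path analysis in the style of the paper.
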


\section{Our $\ddtw$ Algorithm based on RLE} \label{sec:algorithm}

We first explain the data structures which are used in our algorithm.

\vspace*{1pc}
\noindent \textbf{Differential representation $\DR$ of $\D$.}
The first idea of our algorithm
is to use a differential representation $\DR$ of $\D$:
Each cell of $\DR$ contains two fields
that respectively store the horizontal difference
and the vertical difference, namely,
$\DR[i,j].U = \D[i,j] - \D[i-1,j]$ and $\DR[i,j].L = \D[i,j] - \D[i,j-1]$.
We let $\DR[i,1].L = 0$ for any $1 \leq i \leq m$ and 
  $\DR[1,j].U = 0$ for any $1 \leq j \leq n$.
The diagonal difference $\D[i,j] - \D[i-1,j-1]$ can easily be computed
from $\DR[i,j].U$ and $\DR[i,j].L$ and thus is not explicitly stored in $\DR[i,j]$.

In our algorithm we make heavy use of the following lemma:
\begin{lemma} \label{lem:DR_recursive}
  For any $1 < i \leq m$,
  \[
  \DR[i,j].U =
  \begin{cases}
    (a_i-b_1)^2 & \mbox{if } j = 1, \\
    z - \DR[i-1,j].L & \mbox{if } 2 \leq j \leq n,
  \end{cases}
  \]
  and for any $1 < j \leq n$,
  \[
  \DR[i,j].L =
  \begin{cases}
    (a_1-b_j)^2 & \mbox{if } i = 1, \\
    z - \DR[i,j-1].U & \mbox{if } 2 \leq i \leq m,
  \end{cases}
  \]
where $z = \min\{\DR[i-1,j].L, \ \DR[i,j-1].U, \ 0\} + (a_i-b_j)^2$.
\end{lemma}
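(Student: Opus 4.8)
The plan is to derive each recurrence directly from the definitions of the difference fields together with the DTW recurrence~(\ref{recurrence}), treating the boundary cases and the interior case separately. Recall that $\DR[i,j].U = \D[i,j] - \D[i-1,j]$ and $\DR[i,j].L = \D[i,j] - \D[i,j-1]$, so each claimed identity is really a statement about how the quantity $\D[i,j]$ decomposes. First I would handle the boundary rows and columns. For $\DR[i,1].U$ with $j=1$, I substitute the recurrence $\D[i,1] = \D[i-1,1] + (a_i - b_1)^2$ into the definition $\DR[i,1].U = \D[i,1] - \D[i-1,1]$, which immediately yields $(a_i - b_1)^2$. Symmetrically, for $\DR[1,j].L$ with $i=1$, the recurrence $\D[1,j] = \D[1,j-1] + (a_1 - b_j)^2$ gives $\DR[1,j].L = (a_1 - b_j)^2$. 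These two cases are routine.

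The substance is the interior case, so next I would verify the claim $\DR[i,j].U = z - \DR[i-1,j].L$ for $2 \le j \le n$ (with $1 < i \le m$), where $z = \min\{\DR[i-1,j].L,\ \DR[i,j-1].U,\ 0\} + (a_i - b_j)^2$. The key observation is that $z$ is exactly $\D[i,j] - \D[i-1,j-1]$, the diagonal difference. To see this, I start from the interior recurrence and factor out $\D[i-1,j-1]$ from the minimum:
\[
\D[i,j] = \min\{\D[i,j-1],\ \D[i-1,j],\ \D[i-1,j-1]\} + (a_i - b_j)^2,
\]
so that
\[
\D[i,j] - \D[i-1,j-1] = \min\{\D[i,j-1] - \D[i-1,j-1],\ \D[i-1,j] - \D[i-1,j-1],\ 0\} + (a_i - b_j)^2.
\]
Now I identify the two shifted differences inside the minimum with the difference fields at neighbouring cells: $\D[i-1,j] - \D[i-1,j-1] = \DR[i-1,j].L$ and $\D[i,j-1] - \D[i-1,j-1] = \DR[i,j-1].U$. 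Substituting these shows the right-hand side equals precisely $z$, so $z = \D[i,j] - \D[i-1,j-1]$.

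With that identity in hand, the two interior formulas follow by a short algebraic rearrangement. For the $.U$ field, $\DR[i,j].U = \D[i,j] - \D[i-1,j] = \bigl(\D[i,j] - \D[i-1,j-1]\bigr) - \bigl(\D[i-1,j] - \D[i-1,j-1]\bigr) = z - \DR[i-1,j].L$, and for the $.L$ field, $\DR[i,j].L = \D[i,j] - \D[i,j-1] = z - \DR[i,j-1].U$ by the analogous cancellation. I expect the main obstacle to be bookkeeping rather than conceptual: one must be careful that the cells $\DR[i-1,j]$ and $\DR[i,j-1]$ referenced on the right-hand sides are themselves well-defined (i.e. that their differences $\D[i-1,j] - \D[i-1,j-1]$ and $\D[i,j-1] - \D[i-1,j-1]$ correctly coincide with the stored fields given the boundary conventions $\DR[i,1].L = 0$ and $\DR[1,j].U = 0$), so I would check the corner indices $i=1$ or $j=1$ feeding into the minimum to confirm the recurrence remains consistent there. \qed
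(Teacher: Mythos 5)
Your proof is correct and follows essentially the same route as the paper's: both handle the boundary cases directly from recurrence~(\ref{recurrence}) and then, for interior cells, identify $z$ with the diagonal difference $\D[i,j]-\D[i-1,j-1]$ (the paper does this via the substitution $d=\D[i-1,j-1]$, $\D[i-1,j]=d+x$, $\D[i,j-1]=d+y$, $\D[i,j]=d+z$) before reading off $\DR[i,j].U = z - \DR[i-1,j].L$ and $\DR[i,j].L = z - \DR[i,j-1].U$. Your closing concern about the boundary conventions is easily dispatched: in the interior case the referenced cells $\DR[i-1,j].L$ and $\DR[i,j-1].U$ always have $j\geq 2$ and $i\geq 2$ respectively, so the conventions $\DR[i,1].L=0$ and $\DR[1,j].U=0$ never enter the recurrence.
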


\begin{proof}
  $\DR[i,1].U = (a_i-b_1)^2$ and $\DR[1,j].L = (a_1-b_j)^2$
  are clear from recurrence~(\ref{recurrence}).
  Now we consider $1 < i \leq m$ and $1 < j \leq n$,
  and let $d = \D[i-1,j-1]$, $x = \DR[i-1,j].L$, $y = \DR[i,j-1].U$, and $d + z = D[i,j]$.
  Then we have $\D[i-1,j] = d+x$ and $\D[i,j-1] = d+y$ (see Figure~\ref{lem:DR_recursive}).
  It follows from the definition of $\DR$
  that $\DR[i,j].U = \D[i,j] - \D[i-1,j] = z-x$ and
  $\DR[i,j].L = \D[i,j] - \D[i,j-1] = z-y$.
  Since $\D[i,j] = \min\{\D[i-1,j-1], \D[i-1,j], \D[i,j-1]\} + (a_i-b_j)^2$
  by recurrence~(\ref{recurrence}), we obtain $d+z = \min\{d,d+x,d+y\}+(a_i-b_j)^2$
  which leads to $z = \min\{x,y,0\}+(a_i-b_j)^2$.
  \qed
\end{proof}

\noindent \textbf{RLE-based sparse differential representation $\DS$.}
The second key idea of our algorithm is 
to divide the dynamic programming table $\D$ into ``boxes'' that are defined by intersections of maximal runs of $A$ and $B$.
Note that $\D$ contains $M \times N$ such boxes.
Let $\rle{A} = A_1^{k_1}\dots A_M^{k_M}$ and
$\rle{B} = B_1^{l_1} \dots B_N^{l_N}$ be the RLEs of $A$ and $B$.
Let $\ITOP=\sum_i^{I-1}{k_i}+1$,
$\IBOT=\sum_i^I{k_i}$,
$\JLEFT=\sum_j^{J-1}{l_j}+1$, and
$\JRIGHT=\sum_j^J{l_j}$.
We define a sparse table $\DS$ for $\DR$
that consists only of the rows and columns on
the borders of the maximal runs in $A$ and $B$.
Namely, $\DS$ is a sparse table that only stores
the rows $\ITOP,\IBOT~(1 \leq I \leq M)$
and the columns $\JLEFT,\JRIGHT~(1 \leq J \leq N)$,
of $\DR$ (see Figure~\ref{fig:DS}).
\begin{figure}[tb]
  \centering
  \begin{tabular}{c}

    \begin{minipage}{0.37\hsize}
      \centering
      \includegraphics[clip,scale=0.4,bb=0 0 266 240]{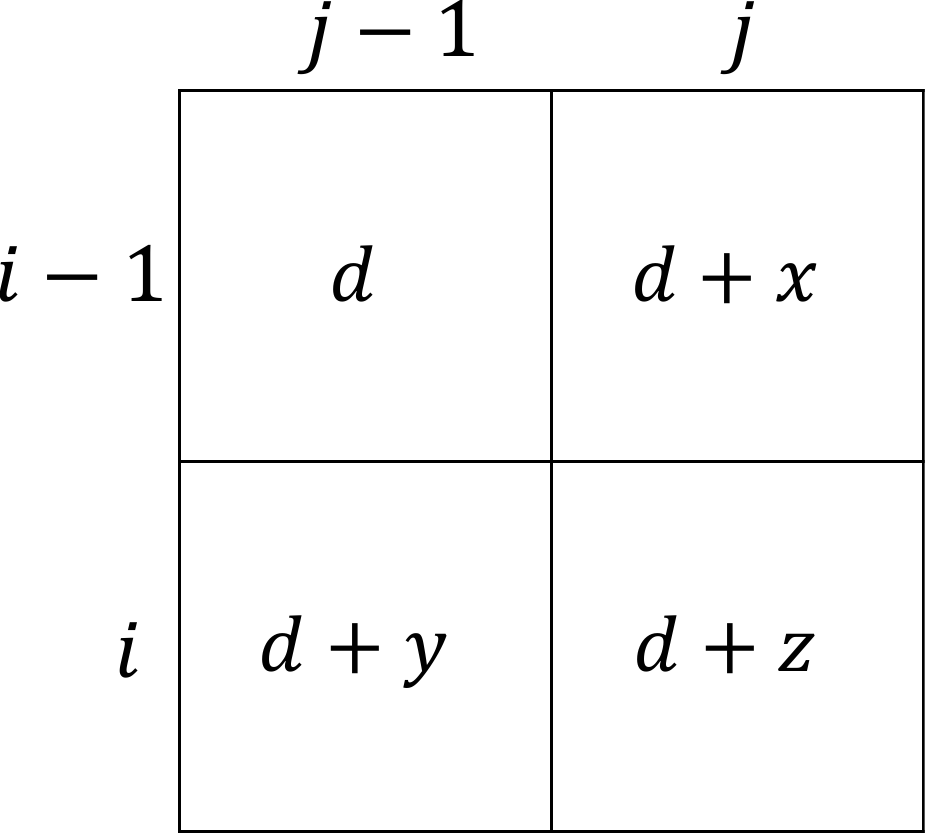}
    \caption{Illustration for Lemma~\ref{lem:DR_recursive}
    which depicts the corresponding cells of the dynamic programming table $\D$,
    where $\D[i-1,j-1] = d$, $\D[i-1,j] = d + x$, $\D[i,j-1] = d + y$, and $D[i,j] = d + z$.}
    \label{fig:DR_recursive}
   \end{minipage}

    \begin{minipage}{0.03\hsize}
        \hfill
    \end{minipage}

    \begin{minipage}{0.57\hsize}
        \centering
        \includegraphics[clip,scale=0.5,bb=0 0 475 293]{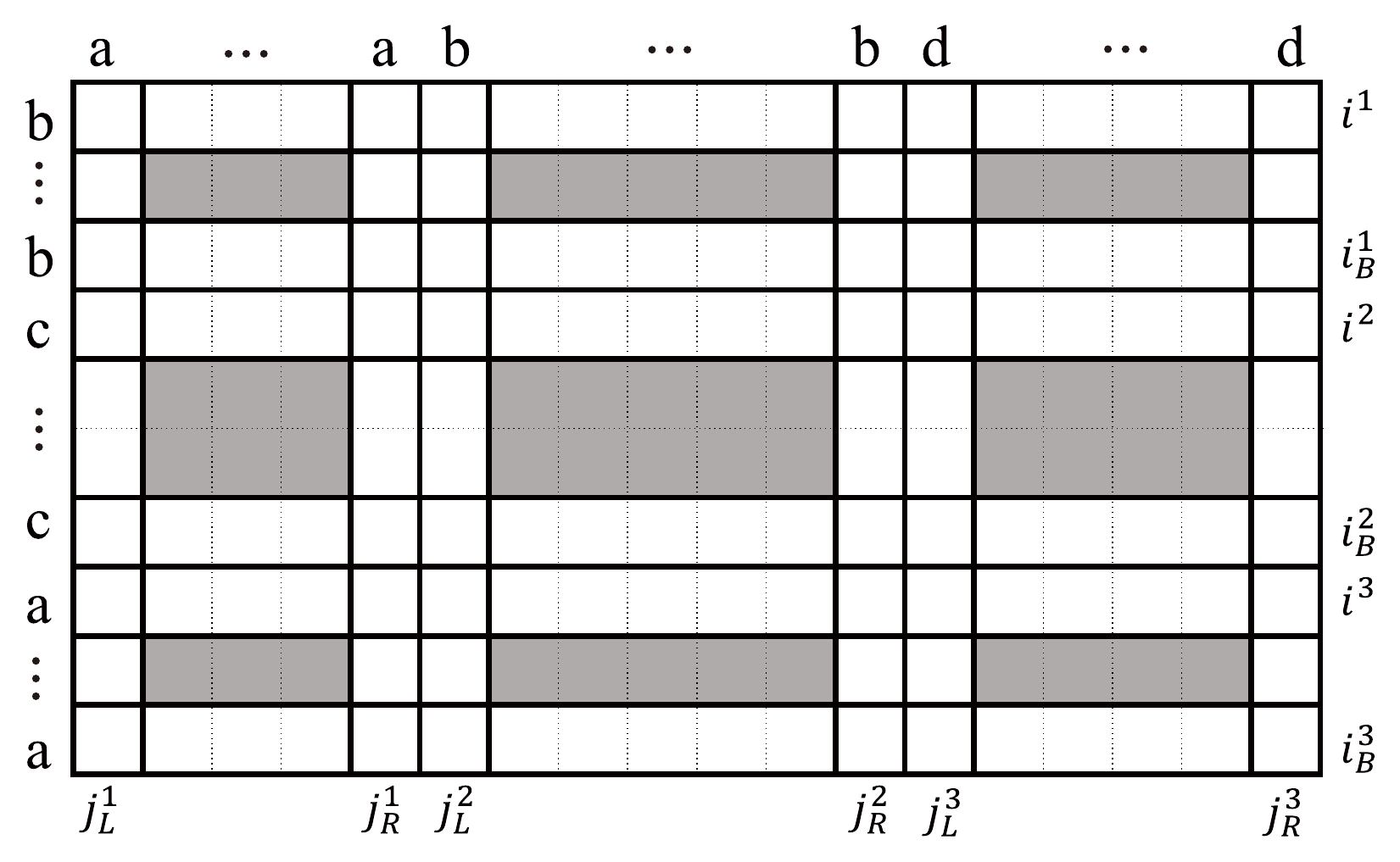}
    \caption{Illustration for $\DS$
    that consists only of the cells of $\DR$ corresponding to
    the maximal run boundaries of $A$ and $B$
    (white rows and columns).
    The gray regions that are surrounded by the box boundaries
    are not stored in $\DS$.}
    \label{fig:DS}
    \end{minipage}

    \end{tabular}
\end{figure} 
Each row and column of $\DS$ is implemented by a linked list as follows:
each cell $\DS[i,j]$ has four links to the upper, lower, left, and right neighbors
in $\DS$ (if these neighbors exist),
plus a diagonal link to the right-lower direction.
This diagonal link from $\DS[i,j]$ points to the first cell $\DS[i+h,j+h]$ 
that is reached by following the right-lower diagonal path from $\DS[i,j]$,
namely, $h \geq 0$ is the smallest integer such that
$i+h = \IBOT$ or $j+h = \JLEFT$.
Clearly $\DS$ occupies $\Theta(mN + nM)$ space.
$\DS$ can answer $\dtw(A,B) = \D[m,n]$ in
$O(m+n)$ time by tracing $O(m+n)$ cells of $\DS$ from $(1,1)$ to $(m,n)$.

For each $1 \leq I < M$ and $1 \leq J < N$,
we consider the region of $\DR$ that is surrounded by the borders of the $I$th and $(I+1)$th runs of $A$,
and the $J$th and $(J+1)$th runs of $B$.
This region is called a \emph{box} for $I,J$, and is denoted by $\B^{I,J}$.
For ease of description, we will sometimes refer to a box $\B^{I,J}$ also in $\D$ and $\DS$.

\subsection{Updating $\DS$ after an edit operation}

Suppose that an edit operation has been performed at position $j^*$ of string $B$
and let $B'$ denote the edited string.
Let $\Dp$ denote the dynamic programming table for $\dtw(A, B')$,
and $\DRp$ the difference representation for $\Dp$.
As Figure~\ref{fig:DR-changed} shows,
the number of changed cells in $\DRp$ can be much smaller than
that of changed cells in $\Dp$ (see also Figure~\ref{fig:worst_case}).
\begin{figure}[th]
    \centering
    \includegraphics[clip,scale=0.45,bb=0 0 800 302]{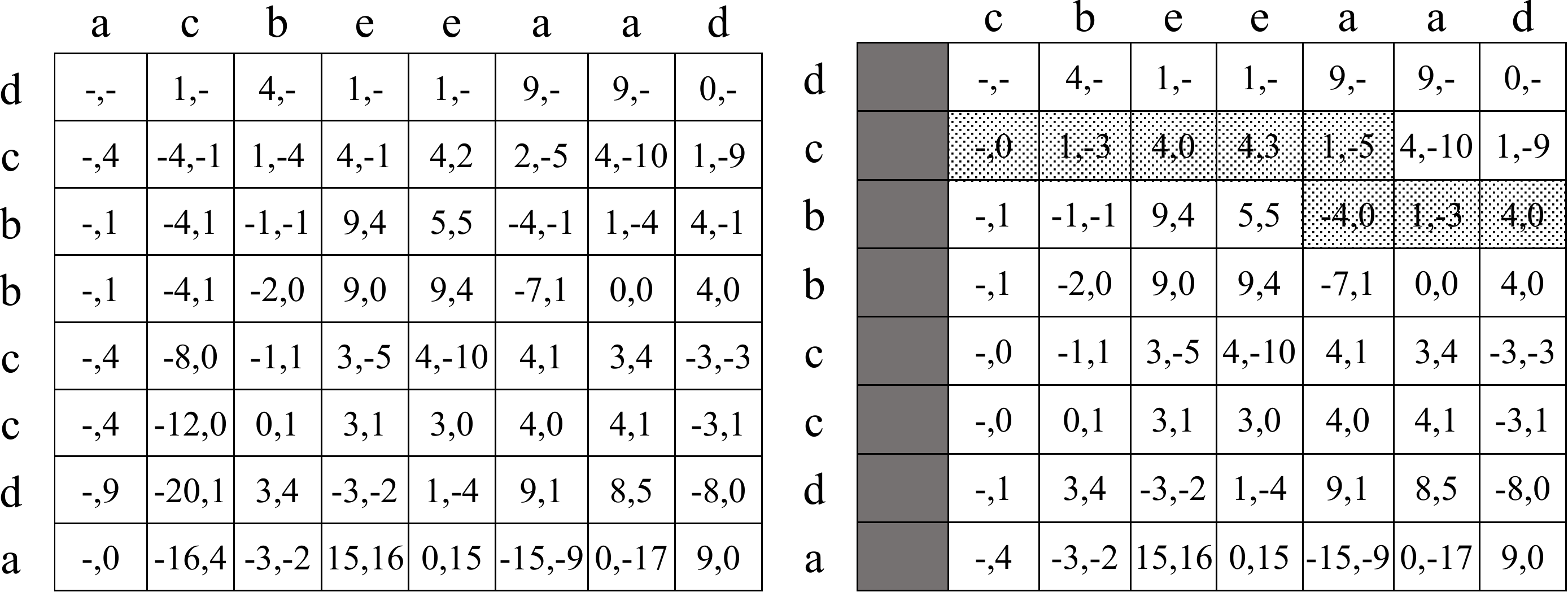}
    \caption{For the running example from Figure~\ref{fig:worst_case}, only the gray cells have different values in the difference representations $\DR$ (left) and $\DRp$ (right).}
    \label{fig:DR-changed}
\end{figure}

Let $\DSp$ denote the sparse table for $\DRp$.
Since $\DS$ consists only of the boundary cells,
the number of changed cells in $\DSp$ can even be much smaller.
In what follows, we show how to efficiently update $\DS$ to $\DSp$.

Because the prefix $B[1..j^*-1]$ remains unchanged
after the edit operation, for any $j < j^*$ we have $\DR[i,j] = \DRp[i,j]$ 
by Lemma~\ref{lem:DR_recursive} and recurrence~(\ref{recurrence}).
Hence, we can restrict ourselves to the indices $j \geq j^*$.
We define $\ell$ as a correcting offset of string indices before and
after the update:
$\ell = -1$ if a character has been inserted at position $j^*$ of $B$,
$\ell = 1$ if a character has been deleted from position $j^*$ of $B$,
and $\ell = 0$ otherwise.
Now, for any $j \geq j^*$,
$B'[j] = B[j+\ell]$ and column $j$ in $\DRp$
corresponds to column $j+\ell$ in $\DR$. 

Let $\B^{I,J}$ be any box on $\DSp$.
For the the top row $\ITOP$ of $\B^{I,J}$,
we use a linked list $\Tlist^{I,J}$ that stores
the column indices $j$~($\JLEFT \leq j \leq \JRIGHT$)
such that $\DS[\ITOP,j+\ell] \neq \DSp[\ITOP,j]$, in increasing order.
We also compute, in each element of the list,
the value for $\Dp[\ITOP,j]$
of the corresponding column index $j$.
We use similar lists $\Blist^{I,J}$, $\Llist^{I,J}$, and $\Rlist^{I,J}$
for the bottom row, left column, and right column
of $\B^{I,J}$, respectively.
We compute these lists when an edit operation is performed to string $B$,
and use them to update $\DS$ to $\DSp$ efficiently.

Let $\changed$ denote the number of cells in our sparse representation
such that $\DS[i+\ell,j]  \neq  \DSp[i,j]$.
In the sequel, we prove:
\begin{theorem} \label{theo:main_theorem}
  Our $\ddtw$ algorithm updates $\DS$ to $\DSp$
  in $O(m \! + \! n \! + \! \changed)$ time.
\end{theorem}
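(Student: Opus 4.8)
The plan is to process the boxes of $\DSp$ in a dependency‑respecting order, maintaining for each box the four boundary change‑lists $\Tlist^{I,J}, \Blist^{I,J}, \Llist^{I,J}, \Rlist^{I,J}$, and then to account for the total work by charging almost everything either to cells counted by $\changed$ or to the single directly edited column (whose length is $O(m)$). Since the prefix $B[1..j^*-1]$ is unchanged, only boxes $\B^{I,J}$ with $J \geq J^*$ can be affected, where $J^*$ is the run of $B$ containing $j^*$. I would visit these boxes in any linear extension of the down/right partial order (e.g.\ increasing $I$, and within each $I$ increasing $J$). By Lemma~\ref{lem:DR_recursive} and recurrence~(\ref{recurrence}), every value on the bottom row and right column of a box is determined by the values on its top row and left column together with the single constant cost $(a_i-b_j)^2$ that holds throughout the box; thus the input of $\B^{I,J}$ is exactly $\Tlist^{I,J}$ (inherited from $\Blist^{I-1,J}$ of the box above) and $\Llist^{I,J}$ (inherited from $\Rlist^{I,J-1}$ of the box to the left), and its output is $\Blist^{I,J}$ and $\Rlist^{I,J}$.

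The boxes in box‑column $J^*$ are a special base case, because the edit sits in their \emph{interior} rather than on a stored boundary, so their change‑lists cannot be inherited. I would handle this column by directly recomputing the single affected column $j^*$ of $\DR$ from top to bottom via Lemma~\ref{lem:DR_recursive}, and (for insertion or deletion) relinking the $O(1)$ boundary columns created or destroyed by a run split or merge. This recomputed column $j^*$ acts as a known ``virtual left boundary'' for the portion of each box $\B^{I,J^*}$ lying to its right, after which these right portions are treated by the same within‑box procedure as ordinary boxes. The recomputation touches $O(m)$ cells in total over all $I$, and the base row $1$ and the structural relinking of boundary columns are likewise handled within an $O(m+n)$ budget.

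The technical heart, and the step I expect to require the most care, is the \textbf{within‑box update claim}: given $\Tlist^{I,J}$ and $\Llist^{I,J}$ together with their new $\Dp$‑values, one can compute $\Blist^{I,J}$ and $\Rlist^{I,J}$ in time $O(1 + |\Tlist^{I,J}| + |\Llist^{I,J}| + |\Blist^{I,J}| + |\Rlist^{I,J}|)$. This is exactly where the RLE structure is essential: inside a box the squared cost is a fixed constant, so by Lemma~\ref{lem:DR_recursive} the horizontal and vertical difference fields exhibit a structured, eventually‑stable behaviour along each right‑lower diagonal. A changed value on the top or left boundary therefore influences only a contiguous diagonal stretch, which can be advanced and, once it stabilises, skipped in $O(1)$ time using the stored diagonal links that jump straight to the box boundary. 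The delicate part is to prove that this stabilisation occurs, that scanning only the changed inputs along their diagonals is sufficient to discover every changed output, and that no changed boundary cell is ever missed; establishing this propagation invariant is where I anticipate the real difficulty.

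Two final accountings then yield the stated bound. First, each cell of $\DS$ lies on the boundary of at most a constant number of boxes, so summing the list‑proportional cost of the within‑box step over all processed boxes telescopes to $O(\changed)$. Second, any active box with $J \neq J^*$ must contain a changed cell on its top row $\ITOP$ or left column $\JLEFT$, since otherwise Lemma~\ref{lem:DR_recursive} forces its interior, bottom row and right column all to be unchanged; moreover a given changed $\DS$‑cell is the inherited top/left input of at most one box, so the map sending each such active box to a representative changed input cell is injective. Hence the number of active boxes, and with it the constant per‑box overhead, is $O(\changed + M) = O(\changed + m)$. Adding the $O(m+n)$ spent on column $J^*$, the base cases, and tracing, the total update time is $O(m + n + \changed)$.
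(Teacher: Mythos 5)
Your high-level architecture --- dependency-ordered box processing, change lists on the box boundaries, charging the per-box overhead to changed input cells and everything else to an $O(m+n)$ budget --- matches the paper's. But the step you explicitly defer (``establishing this propagation invariant is where I anticipate the real difficulty'') is not a detail: it is the entire technical content of the theorem, and it does not follow from Lemma~\ref{lem:DR_recursive} alone, as your sketch suggests. Two separate facts are needed. First, the ``eventually-stable behaviour'' along diagonals inside a box is not merely eventual: one needs the exact identity $\DR[i,j]=\DR[i-1,j-1]$ for all cells strictly inside a box (Lemma~\ref{lem:key_lemma_diagonal}), and this rests on the monotonicity of $\D$ within boxes (Lemma~\ref{lem:monotone_D}, imported from Froese et al.) via Corollary~\ref{coro:non-negative_DR} --- these are properties of the absolute table $\D$, not consequences of the difference recurrence. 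Without this exact invariance, the reduction of ``changed cells on row $\IBOT$ or column $\JRIGHT$'' to ``changed cells on row $\ITOP+1$ or column $\JLEFT+1$'', and the $O(1)$ skip to the boundary via diagonal links, cannot be justified, and your claim that an unchanged input forces an unchanged output also needs the propagation statement of Lemma~\ref{lem:DR_propagation} rather than Lemma~\ref{lem:DR_recursive} by itself.

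Second, even granting the diagonal invariance, detecting whether the cell one step inside the box has changed requires computing $\DRp[\ITOP+1,\hat{j}]$ in $O(1)$ time. By Lemma~\ref{lem:DR_recursive} this needs $\DRp[\ITOP+1,\hat{j}-1].U$, i.e.\ the two absolute values $\Dp[\ITOP+1,\hat{j}-1]$ and $\Dp[\ITOP,\hat{j}-1]$ --- and absolute values at interior cells are exactly what a sparse differential representation does not provide for free (tracing them from the origin would cost $\Theta(m+n)$ per cell and destroy the bound). The paper closes this hole with the closed-form expression of Lemma~\ref{lem:s-t_t-s}, which gives $\Dp[\ITOP+1,\hat{j}-1]=\Dp[\ITOP,\hat{j}-2]+(A[\ITOP]-B[\hat{j}])^2$, combined with the $\Dp$ values cached in the change lists; nothing in your proposal plays this role. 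So what you have is an accurate plan of attack whose crux is still open: to make the $O(m+n+\changed)$ accounting valid you must supply (i) the monotonicity-based diagonal invariance inside boxes and (ii) a constant-time evaluation of the new interior $\DRp$ and $\Dp$ values.
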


\noindent \textbf{Initial step.}
Suppose that $j^*$ is in the $J$th run of string $B$.
Let $\B^{I,J}$ be any of the $M$ boxes of $\DR$ that contain column $j^*$,
where $\JLEFT \leq j^* \leq \JRIGHT$.
Due to Lemma~\ref{lem:DR_recursive},
$(1,j^*)$ is the only cell in the first row where we may have
$\DSp[1,j^*] \neq \DS[1,j^*+\ell]$.
$\DSp[1,j^*]$ can be easily computed in $O(1)$ time by Lemma~\ref{lem:DR_recursive}.
Then, $\Dp[1,j^*]$ can be computed in $O(j^*) \subseteq O(n)$ time
by tracing the first row and using $\DSp[1,j].L$
for increasing $j = 1, \ldots, j^*$.
The list $\Tlist^{I,J}$ only contains $j^*$ (coupled with $\Dp[1,j^*]$)
if $\DSp[1,j^*] \neq \DS[1,j^*+\ell]$,
and it is empty otherwise.

Editing string $B$ at position $j^*$ incurs some structural changes to $\DS$:
(a) $\B^{I,J}$ gets wider by one (insertion of the same character to a run),
(b) $\B^{I,J}$ gets narrower by one (deletion of a character),
(c) $\B^{I,J}$ is divided into $2M$ or $3M$ boxes (insertion of a different character to a run, or character substitution).

In cases (a) and (b),
the diagonal links of $\B^{I,J}$ need to be updated.
A crucial observation is that the total number of such diagonal links to update
is bounded by $m$ for all the $M$ boxes $\B^{1,J}$, \ldots, $\B^{M,J}$,
since the destinations of such diagonal links are
within the same column of $\DSp$
($\JRIGHT+1$ in case (a), and $\JRIGHT-1$ in case (b)).
For each box $\B^{I,J}$,
if $\JRIGHT-\JLEFT \geq \ITOP-\IBOT$ (i.e. $\B^{I,J}$ is a square or a horizontal rectangle),
then we scan the top row $\ITOP$ from right to left
and fix the diagonal links until encountering
the first cell in $\ITOP$ whose diagonal link needs no updates
(see Figure~\ref{fig:digonal_1}).
The case with $\JRIGHT-\JLEFT < \ITOP-\IBOT$ (i.e. $\B^{I,J}$ is a vertical rectangle) can be treated similarly.
By the above observation,
these costs for all boxes $\B^{I,J}$ that contain the edit position $j^*$
sum up to $O(m)$.

\begin{figure}[tb]
  \centerline{
  \begin{tabular}{c}
    \begin{minipage}{0.48\hsize}
      \centering
        \raisebox{4mm}{\includegraphics[clip,scale=0.7,bb=0 0 250 159]{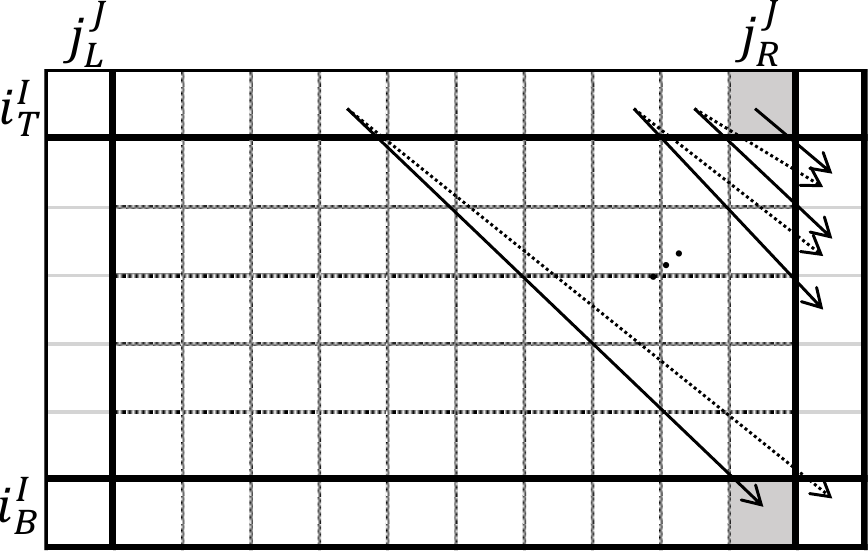}}
        \caption{Case (a) of the initial step.
        The dashed arcs are the old diagonal links in $\DS$,
        and the sold arcs are the modified diagonal links in $\DSp$.
        The gray cells depict cells $(\ITOP, \JRIGHT)$ and $(\IBOT, \JRIGHT)$.}
       \label{fig:digonal_1}
    \end{minipage}
    \begin{minipage}{0.04\hsize}
        \hfill
    \end{minipage}    
    \begin{minipage}{0.48\hsize}
      \centering
        \includegraphics[clip,scale=0.7,bb=0 0 249 162]{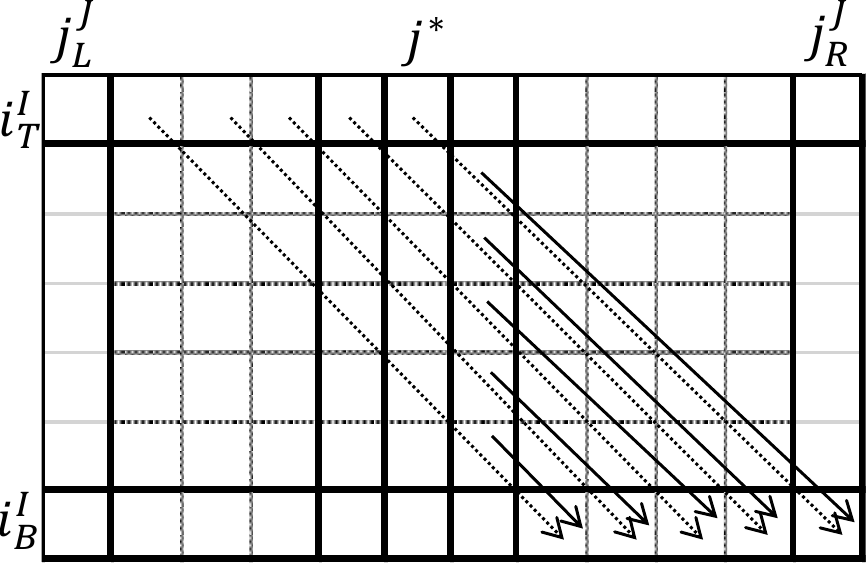}
        \caption{Case (c) of the initial step,
        where character substitution has been performed at position $j^*$.
        The dashed arcs are the old diagonal links in $\DS$
        from row $\ITOP$ up to $j^*$,
        and the sold arcs are the modified diagonal links 
        from new column $j^*$ in $\DSp$.}
        \label{fig:digonal_2}
   \end{minipage}
  \end{tabular}
  }
\end{figure}

In case (a), we shift the right column $\JRIGHT$ of $\DS$
to the right by one position, and reuse it 
as the right column $\JRIGHT+1$ of $\DSp$.
This incurs two new cells $(\ITOP, \JRIGHT)$ and $(\IBOT, \JRIGHT)$
in $\DSp$ (the gray cells in Figure~\ref{fig:digonal_1}).
We can compute $\DSp[\ITOP, \JRIGHT]$
in $O(1)$ time using Lemma~\ref{lem:DR_recursive}.
Now consider to compute $\DSp[i,\JRIGHT+1]$ for the new right column.
Since this right column initially stores $\DS[i,\JRIGHT]$ for the old $\DS$,
using Lemma~\ref{lem:DR_recursive},
we can compute $\DSp[i,\JRIGHT+1]$ in increasing order of $i = 1, \ldots, m$,
from top to bottom, in $O(1)$ time each.
We can compute $\Dp[1,\JRIGHT+1]$ in $O(\JRIGHT)$ time by simply
scanning the first row.
Then, we can compute $\Dp[i,\JRIGHT+1]$ for increasing $i = 2, \ldots, m$,
using $\DSp[i,\JRIGHT+1]$,
and construct $\Rlist^{I,J}$.
This takes a total of $O(\JRIGHT + m) \subseteq O(m+n)$ time.
Finally, $\DSp[\IBOT,\JRIGHT]$ is computed from $\Dp[\IBOT,\JRIGHT+1]$
and $\DSp[\IBOT,\JRIGHT+1].L$ in $O(1)$ time.
Case (b) can be treated similarly.

For case (c),
we consider a sub-case where
a character substitution was performed completely inside a run of $B$,
at position $j^*$.
This divides an existing box $\B^{I,J}$ into three
boxes $\B^{I,J}$, $\B^{I,J+1}$, and $\B^{I,J+2}$.
Thus, there appear three new columns $j^*-1$, $j^*$, and $j^*+1$ in $\DSp$.
Then, the diagonal links for these new columns
can be computed in $O(1)$ time each,
by scanning row $\ITOP$ from $j^*+1$, from right to left
(see Figure~\ref{fig:digonal_2}).
The $\DSp$ values for the cells in these new columns,
as well as the $\Dp$ values for column $j^*+1$, 
can also be computed in similar ways to cases (a) and (b).
The other sub-cases of (c) can be treated similarly.

\vspace*{0.5pc}
\noindent \textbf{Updating cells on row $\ITOP$ and column $\JLEFT$.}
In what follows, suppose that we are given a box $\B^{I,J}$
to the right of the edit position $j^*$,
in which some boundary cell values may have to be updated.
For ease of exposition, we will discuss the simplest case with substitution
where the column indices do not change between $\DS$ and $\DSp$.
The cases with insertion/deletion can be treated similarly
by considering the offset value $\ell$ appropriately.

Now our task is to quickly detect the boundary cells $(i,j)$
of $\B^{I,J}$ such that $\DS[i,j] \neq \DSp[i,j]$,
and to update them.
We assume that the boundary cell values of
the preceding boxes $\B^{I-1,J}$ and $\B^{I,J-1}$ have already been computed.

We consider how to detect
the cells on the top boundary row $\ITOP$ 
and the cells on the left boundary column $\JLEFT$ of box $\B^{I,J}$
that need to be updated, and how to update them.
For this sake, we use the following lemma on the values of $\DR$,
which is immediate from Lemma~\ref{lem:DR_recursive}:
\begin{lemma} \label{lem:DR_propagation}
  Let $1 \leq i \leq m$ and $1 \leq j \leq n$.
  Suppose that for any cell $(i',j')$ with $i' < i$ or $j' < j$,
  the value of $\DRp[i',j']$ has already been computed.
  If $\DR[i,j] \neq \DRp[i,j]$,
  then $\DR[i,j-1].U \neq \DR'[i,j-1].U$ or $\DR[i-1,j].L \neq \DRp[i-1,j].L$. 
\end{lemma}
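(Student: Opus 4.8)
The plan is to prove the contrapositive. Rather than starting from $\DR[i,j] \neq \DRp[i,j]$, I would assume that both incoming differences are unchanged, i.e. $\DR[i,j-1].U = \DRp[i,j-1].U$ and $\DR[i-1,j].L = \DRp[i-1,j].L$, and show that then $\DR[i,j] = \DRp[i,j]$, meaning both the $U$ and the $L$ field of the cell agree. Since ``$\DR[i,j] = \DRp[i,j]$'' is the exact negation of the lemma's hypothesis, and the assumed two equalities are the negation of the lemma's conclusion, this contrapositive is logically equivalent to the stated implication. Throughout I would keep the standing assumption, inherited from the surrounding text, that the box $\B^{I,J}$ lies to the right of the edit position $j^*$ in the simplest (substitution, $\ell=0$) case, so that column $j$ carries an unchanged character $b'_j = b_j$.

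For the interior range $1 < i \leq m$, $1 < j \leq n$, I would simply invoke Lemma~\ref{lem:DR_recursive} on both tables. Writing $z = \min\{\DR[i-1,j].L,\ \DR[i,j-1].U,\ 0\} + (a_i-b_j)^2$, and the analogous $z'$ for $\DRp$ with $b'_j$ in place of $b_j$, the lemma gives $\DR[i,j].U = z - \DR[i-1,j].L$ and $\DR[i,j].L = z - \DR[i,j-1].U$, together with their primed counterparts. The key observation is that $z$ is a function of only the two incoming differences and the squared cost $(a_i-b_j)^2$. The two assumed equalities make the $\min$ terms coincide, and because the edit leaves column $j$ intact we have $b'_j = b_j$, so the cost terms coincide as well; hence $z = z'$. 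Substituting back, $\DR[i,j].U = z - \DR[i-1,j].L = z' - \DRp[i-1,j].L = \DRp[i,j].U$ and, symmetrically, $\DR[i,j].L = \DRp[i,j].L$, which is exactly $\DR[i,j] = \DRp[i,j]$.

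The boundary cells are handled by the first-row and first-column clauses of Lemma~\ref{lem:DR_recursive} as short separate cases, so that the full range $1 \leq i \leq m$, $1 \leq j \leq n$ claimed in the statement is covered: for $i=1$ we have $\DR[1,j].U = 0 = \DRp[1,j].U$ identically and $\DR[1,j].L = (a_1-b_j)^2 = \DRp[1,j].L$ once $b'_j = b_j$, and symmetrically for $j=1$, so in these degenerate rows and columns the cell value cannot change at all and the implication holds vacuously. The one point I would take care to state explicitly — which is the only real content beyond a one-line substitution, and the reason the lemma is ``immediate'' from Lemma~\ref{lem:DR_recursive} — is the role of the character cost in $z$: the equality $z = z'$ requires both the equality of the incoming differences \emph{and} the fact that $(a_i-b_j)^2$ is unaffected by the edit for the columns being processed. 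I would therefore flag the assumption $b'_j = b_j$ as essential, since without it a changed character could alter $\DR[i,j]$ even when both incoming differences are preserved, and the implication would fail.
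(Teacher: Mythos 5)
Your proof is correct and takes essentially the same route as the paper: the paper gives no explicit proof, declaring the lemma ``immediate from Lemma~\ref{lem:DR_recursive}'', and your contrapositive unfolding of that recurrence (both incoming differences equal and the cost term $(a_i-b_j)^2$ unchanged imply $z=z'$, hence equal $U$ and $L$ fields) is exactly the argument being alluded to, with the boundary rows/columns correctly dispatched as vacuous cases. Your explicit flagging of the assumption $b'_j=b_j$ is a worthwhile addition rather than a deviation: the lemma as literally stated would indeed fail at the edited column itself, and the paper relies on its surrounding context (boxes to the right of $j^*$, with the offset $\ell$ handling insertion/deletion) to make the statement sound.
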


Intuitively, Lemma~\ref{lem:DR_propagation} states that
the cell $(i,j)$ such that $\DR[i,j] \neq \DRp[i,j]$
must be propagated from its left neighbor or its top neighbor.
We use this lemma for updating the boundaries of each box $\B^{I,J}$
stored in $\DS$.
Recall that the values on the preceding row $\ITOP-1 = i_{\mathrm{B}}^{I-1}$
and on the preceding column $\JLEFT-1 = j_{\mathrm{R}}^{J-1}$
have already been updated.
Then, the cells on $\ITOP$ and $\JLEFT$ of box $\B^{I,J}$
with $\DS[i,j] \neq \DSp[i',j']$ can be found in constant time each,
from the lists $\Blist^{I-1,J}$ and $\Rlist^{I,J-1}$ maintained for
the preceding row $\ITOP - 1 = i_{\mathrm{B}}^{I-1}$ and
preceding column $\JLEFT - 1 = j_{\mathrm{R}}^{J-1}$, respectively.

We process column indices $\Blist^{I-1,J}$ in increasing order,
and suppose that we are currently processing column index $\hat{j} \in \Blist^{I-1,J}$
in the bottom row $i_{\mathrm{B}}^{I-1}$ of the preceding box $\B^{I-1,J}$.
According to the above arguments,
this indicates that the cells $(\ITOP, j)$
in the top row $\ITOP$ of $\B^{I,J}$
that need to be updated (i.e., $\DS[\ITOP,j] \neq \DSp[\ITOP,j]$).
We assume that, for any $j'$ with $\JLEFT \leq j' < \hat{j}$,
the value of $\DSp[\ITOP,j']$ has already been computed. 
Also, we have maintained a partial list for $\Tlist^{I,J}$
where the last element of this partial list stores
the largest $j''$
such that $\JLEFT \leq j'' < \hat{j}$ and
$\DS[\ITOP, j''] \neq \DSp[\ITOP, j'']$,
together with the value of $\D'[\ITOP,j'']$.
Now it follows from Lemma~\ref{lem:DR_recursive} that
both $\DSp[\ITOP, \hat{j}].U$ and $\DSp[\ITOP, \hat{j}].L$ can be
respectively computed in constant time from
$\DSp[\ITOP-1,\hat{j}].L$ and $\DSp[\ITOP,\hat{j}-1].U$,
and thus we can check whether $\DS[\ITOP,\hat{j}] \neq \DSp[\ITOP,\hat{j}]$
in constant time as well.
In case $\DS[\ITOP,\hat{j}] \neq \DSp[\ITOP,\hat{j}]$,
we append $\hat{j}$ to the partial list for $\Tlist^{I,J}$.
By the definition of $\DS$,
we have $\Dp[\ITOP,\hat{j}] = \Dp[\ITOP-1,\hat{j}] - \DSp[\ITOP, \hat{j}].U$.
Since $\Dp[\ITOP-1,\hat{j}] = \Dp[i_{\mathrm{B}}^{I-1},\hat{j}]$ is stored
with the current column index $\hat{j}$ in the list $\Blist^{I-1,J}$,
$\Dp[\ITOP,\hat{j}]$ can also be computed in constant time.

Suppose we have processed cell $(\ITOP,\hat{j})$.
We perform the same procedure as above
for the right-neighbor cells $(\ITOP,\hat{j}+p)$
with $p = 1$ and increasing $p$,
until encountering the first cell $(\ITOP,\hat{j}+p)$
such that (1) $\DS[\ITOP,\hat{j}+p] = \DSp[\ITOP,\hat{j}+p]$,
(2) $\hat{j}+p \in \Blist^{I-1,J}$,
or (3) $\hat{j}+p = \JRIGHT+1$.
In cases (1) and (2), we move on to the next element of in $\Blist^{I-1,J}$,
and perform the same procedure as above.
We are done when we encounter case (3) or $\Blist^{I-1,J}$ becomes empty.
The total number of cells $(\ITOP,\hat{j}+p)$ 
for all boxes in $\DSp$ is bounded by $\changed$.

In a similar way, 
we process row indices $\Rlist^{I,J-1}$ in increasing order,
update the cells on the left column $\JLEFT$,
and maintain another partial list for $\Llist^{I,J}$.

\vspace*{0.5pc}
\noindent \textbf{Updating cells on row $\IBOT$ and column $\JRIGHT$.}
Let us consider how to detect
the cells on the bottom row $\IBOT$ 
and the cells on the right column $\JRIGHT$ of box $\B^{I,J}$
that need to be updated, and how to update them.

The next lemma shows
monotonicity on the values of $\D$ inside each $\B^{I,J}$.
\begin{lemma}[\cite{FroeseJRW2020}] \label{lem:monotone_D}
  For any $(i,j)$ with $1 \leq i \leq m$ and $\JLEFT < j \leq \JRIGHT$,
  $\D[i,j] \geq \D[i,j-1]$.
  For any $(i,j)$ with $\ITOP < i \leq \IBOT$ and $1 \leq j \leq n$,
  $\D[i,j] \geq \D[i-1,j]$.
\end{lemma}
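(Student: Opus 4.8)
The plan is to prove the first inequality—horizontal monotonicity within a run of $B$—and to obtain the second by the symmetric argument with the roles of $A$ and $B$ (and of $i$ and $j$) interchanged. The decisive feature of a run is that all its characters coincide: for $\JLEFT < j \leq \JRIGHT$ we have $b_j = b_{j-1}$, hence $(a_i - b_j)^2 = (a_i - b_{j-1})^2$. I would fix such a $j$ and induct on $i$.

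For the base case $i = 1$, the first-row rule of recurrence~(\ref{recurrence}) gives $\D[1,j] = \D[1,j-1] + (a_1 - b_j)^2 \geq \D[1,j-1]$, since the added squared term is nonnegative. For the inductive step at a row $i \geq 2$, I would assume the claim for row $i-1$, i.e. $\D[i-1,j] \geq \D[i-1,j-1]$. This assumption lets me discard $\D[i-1,j]$ from the minimum in recurrence~(\ref{recurrence}), because $\D[i-1,j] \geq \D[i-1,j-1]$ means it never strictly beats $\D[i-1,j-1]$; thus $\D[i,j] = \min\{\D[i,j-1], \D[i-1,j-1]\} + (a_i - b_j)^2$. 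I then split into two cases. If the minimum is $\D[i,j-1]$, then $\D[i,j] = \D[i,j-1] + (a_i-b_j)^2 \geq \D[i,j-1]$ at once. If the minimum is $\D[i-1,j-1]$, then $\D[i,j] = \D[i-1,j-1] + (a_i-b_j)^2$, which I would compare against the universal upper bound $\D[i,j-1] \leq \D[i-1,j-1] + (a_i - b_{j-1})^2$; using $b_{j-1}=b_j$, the right-hand side equals $\D[i,j]$, giving $\D[i,j] \geq \D[i,j-1]$.

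The one place that needs care—and the main obstacle—is justifying that upper bound uniformly, since $\D[i,j-1]$ is governed by a different case of recurrence~(\ref{recurrence}) depending on whether $j-1 \geq 2$ or $j-1 = 1$. When $j-1 \geq 2$ it follows by simply dropping the other two arguments of the general minimum, as $\D[i-1,j-1]$ is one of them. When $j-1 = 1$ (the boundary subcase $J=1$, $\JLEFT=1$), the first-column rule yields $\D[i,1] = \D[i-1,1] + (a_i-b_1)^2$, which is precisely the bound with equality; here the hypothesis $i \geq 2$ guarantees the rule applies. Once this boundary subcase is dispatched, the remainder is a direct consequence of the nonnegativity of the squared differences and of the inductive hypothesis.

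An equivalent geometric route would take an optimal warping path to $(i,j)$ and project its contiguous column-$j$ portion onto column $j-1$: because $b_j = b_{j-1}$ the projected steps cost exactly the same, while any cell thereby duplicated with the existing column-$(j-1)$ entry is deleted, only decreasing the total cost. This produces a warping path to $(i,j-1)$ of cost no larger than $\D[i,j]$, again establishing $\D[i,j] \geq \D[i,j-1]$; I would nonetheless prefer the inductive argument above for a fully rigorous write-up.
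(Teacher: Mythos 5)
Your proof is correct, but there is nothing in the paper to compare it against: the paper does not prove this lemma at all, it imports it verbatim from Froese et al.~\cite{FroeseJRW2020} (hence the citation in the lemma header). So your write-up supplies a proof that the paper deliberately omits. On its merits, the argument is sound and complete. The key observation $b_j = b_{j-1}$ for $\JLEFT < j \leq \JRIGHT$ is exactly what makes the run-interior claim true; the induction on $i$ with $j$ fixed is well-founded (the base case $i=1$ uses the first-row rule, which applies since $j > \JLEFT \geq 1$ forces $j \geq 2$); discarding $\D[i-1,j]$ from the minimum via the inductive hypothesis is valid; and in the remaining case the inequality $\D[i,j-1] \leq \D[i-1,j-1] + (a_i - b_{j-1})^2$ is correctly justified in both subcases, including the boundary $j-1=1$ where the first-column rule gives it with equality. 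The symmetry transfer to the second statement is legitimate because recurrence~(\ref{recurrence}) is invariant under exchanging the roles of $A$ and $B$ (rows and columns). Your alternative path-projection argument is also essentially right and is closer in spirit to how one usually thinks about DTW monotonicity (project the column-$j$ suffix of an optimal warping path onto column $j-1$, using $b_j = b_{j-1}$ to preserve costs), though as you note it requires a little care about where the path first enters column $j$; the inductive version is the cleaner one to put on paper.
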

The next corollary is immediate from Lemma~\ref{lem:monotone_D}.
\begin{corollary} \label{coro:non-negative_DR}
  For any cell $(i,j)$ with $1 \leq i \leq m$ and $\JLEFT < j \leq \JRIGHT$,
  $\DR[i,j].L \geq 0$.
  Also, for any cell $(i,j)$ with $\ITOP < i \leq \IBOT$ and $1 \leq j \leq n$,
  $\DR[i,j].U \geq 0$.
\end{corollary}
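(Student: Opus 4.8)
The plan is to obtain both inequalities by reading Lemma~\ref{lem:monotone_D} through the definition of the differential representation. Recall that the two fields of $\DR$ are literally the first differences of $\D$: by definition $\DR[i,j].L = \D[i,j] - \D[i,j-1]$ and $\DR[i,j].U = \D[i,j] - \D[i-1,j]$. Hence each field is nonnegative precisely when the matching monotonicity inequality of $\D$ holds, and the whole statement reduces to a direct substitution.

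Concretely, for the first claim I would fix a cell $(i,j)$ with $1 \leq i \leq m$ and $\JLEFT < j \leq \JRIGHT$ and invoke the horizontal part of Lemma~\ref{lem:monotone_D}, which gives $\D[i,j] \geq \D[i,j-1]$; subtracting yields $\DR[i,j].L = \D[i,j] - \D[i,j-1] \geq 0$. Symmetrically, for the second claim I would fix $(i,j)$ with $\ITOP < i \leq \IBOT$ and $1 \leq j \leq n$, apply the vertical part of the lemma to get $\D[i,j] \geq \D[i-1,j]$, and conclude $\DR[i,j].U = \D[i,j] - \D[i-1,j] \geq 0$.

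There is essentially no obstacle here; the only point that warrants a check is that the index ranges carry over verbatim. The horizontal monotonicity of Lemma~\ref{lem:monotone_D} is asserted exactly on the columns strictly inside a run of $B$ (that is, $\JLEFT < j \leq \JRIGHT$), which is the same range over which the corollary claims $\DR[i,j].L \geq 0$; likewise the vertical monotonicity is asserted on the rows strictly inside a run of $A$ (that is, $\ITOP < i \leq \IBOT$), matching the range for $\DR[i,j].U \geq 0$. Since the fields of $\DR$ are nothing but these signed differences, the corollary is an immediate reformulation of the lemma and requires no further work.
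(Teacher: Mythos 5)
Your proposal is correct and matches the paper exactly: the paper declares the corollary ``immediate from Lemma~\ref{lem:monotone_D},'' and your argument simply spells out that immediacy by substituting the definitions $\DR[i,j].L = \D[i,j] - \D[i,j-1]$ and $\DR[i,j].U = \D[i,j] - \D[i-1,j]$ into the two monotonicity inequalities, with the index ranges (which also exclude the boundary cases $j=1$ and $i=1$ where $\DR$ is set to $0$ by convention) carrying over verbatim.
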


Now we obtain the next lemma, which is a key to our algorithm.
\begin{lemma} \label{lem:key_lemma_diagonal}
  For any cell $(i,j)$ with
  $\ITOP+1 < i \leq \IBOT$ and $\JLEFT+1 < j \leq \JRIGHT$,
  $\DR[i,j] = \DR[i-1,j-1]$.
\end{lemma}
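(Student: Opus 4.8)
My plan rests on a single observation that trivializes the recurrence inside a box. Since the $I$th run of $A$ and the $J$th run of $B$ each repeat one character, the cost $(a_i-b_j)^2$ equals a fixed constant $c$ for every cell $(i,j)$ of $\B^{I,J}$. The key step is to notice that Corollary~\ref{coro:non-negative_DR} forces the minimum in Lemma~\ref{lem:DR_recursive} to be attained at $0$: for any interior cell with $\ITOP < i \leq \IBOT$ and $\JLEFT < j \leq \JRIGHT$, the corollary gives $\DR[i-1,j].L \geq 0$ and $\DR[i,j-1].U \geq 0$, so the quantity $z = \min\{\DR[i-1,j].L,\ \DR[i,j-1].U,\ 0\} + (a_i-b_j)^2$ of Lemma~\ref{lem:DR_recursive} equals $c$. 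Hence the recurrence collapses, for every such cell, to the two identities
\[
\DR[i,j].U = c - \DR[i-1,j].L
\qquad\text{and}\qquad
\DR[i,j].L = c - \DR[i,j-1].U.
\]

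With these in hand, I would prove that the two fields of $\DR[i,j]$ and $\DR[i-1,j-1]$ agree separately, each by chaining the two identities once. For the $U$-field: the first identity at $(i,j)$ gives $\DR[i,j].U = c - \DR[i-1,j].L$, and the second identity applied at $(i-1,j)$ gives $\DR[i-1,j].L = c - \DR[i-1,j-1].U$; substituting, the constants cancel and $\DR[i,j].U = \DR[i-1,j-1].U$. The $L$-field is symmetric: the second identity at $(i,j)$ followed by the first identity at $(i,j-1)$ gives $\DR[i,j].L = c - \DR[i,j-1].U = c - (c - \DR[i-1,j-1].L) = \DR[i-1,j-1].L$. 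Combining the two fields yields $\DR[i,j] = \DR[i-1,j-1]$, and notably no induction is required.

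I expect the only real obstacle to be the index bookkeeping, which is precisely what dictates the hypotheses $\ITOP+1 < i$ and $\JLEFT+1 < j$ (strictly stronger than $\ITOP < i$ and $\JLEFT < j$). To legitimately collapse $z$ at the auxiliary cells $(i-1,j)$ and $(i,j-1)$, I must check that these cells and the neighbours their recurrences reference still satisfy the hypotheses of Corollary~\ref{coro:non-negative_DR}. Concretely, applying the second identity at $(i-1,j)$ needs $\DR[i-1,j-1].U \geq 0$, i.e. row index $i-1$ must still exceed $\ITOP$, which is guaranteed by $i > \ITOP+1$; symmetrically, applying the first identity at $(i,j-1)$ needs $\DR[i-1,j-1].L \geq 0$, i.e. column index $j-1$ must still exceed $\JLEFT$, guaranteed by $j > \JLEFT+1$. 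Verifying these range conditions is routine, but it is the step where the stated bounds are actually consumed.
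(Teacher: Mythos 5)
Your proof is correct and takes essentially the same route as the paper's own proof: both use Corollary~\ref{coro:non-negative_DR} to collapse the minimum in Lemma~\ref{lem:DR_recursive} to $z=(a_i-b_j)^2$, obtain the two identities $\DR[i,j].U=(a_i-b_j)^2-\DR[i-1,j].L$ and $\DR[i,j].L=(a_i-b_j)^2-\DR[i,j-1].U$, and then chain them once (at $(i-1,j)$ for the $U$-field and at $(i,j-1)$ for the $L$-field), with the constants cancelling because the cost is the same character-pair cost throughout the box. Your explicit bookkeeping of where the stronger hypotheses $\ITOP+1<i$ and $\JLEFT+1<j$ are consumed (to justify applying the collapsed recurrence at the auxiliary cells) is a detail the paper leaves implicit, but the argument is the same.
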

\begin{proof}
  By Corollary~\ref{coro:non-negative_DR},
  $\DR[i-1,j].L \geq 0$ and $\DR[i,j-1].U \geq 0$
  for $\ITOP+1 < i \leq \IBOT$ and $\JLEFT+1 < j \leq \JRIGHT$.
  Thus clearly $\min\{\DR[i-1,j].L, \DR[i,j-1].U, 0\} = 0$.
  Therefore, for the value of $z$ in Lemma~\ref{lem:DR_recursive},
  we have $z = (a_i - b_j)^2$, which leads to
  \begin{eqnarray}
    \DR[i,j].U & = & (a_i-b_j)^2 - \DR[i-1,j].L \label{eqn:U} \\
    \DR[i,j].L & = & (a_i-b_j)^2 - \DR[i,j-1].U \label{eqn:L} 
  \end{eqnarray}
  By applying equation~(\ref{eqn:L}) to the $\DR[i-1,j].L$ term of equation~(\ref{eqn:U}),
  we get
  \[ \DR[i,j].U = (a_i-b_j)^2 - ((a_{i-1}-b_j)^2 - \DR[i-1,j-1].U). \]
  Recall that $a_i=a_{i-1}$,
  since we are considering cells in the same box $\B^{I,J}$.
  Thus $\DR[i,j].U = \DR[i-1,j-1].U$.
  By applying equation~(\ref{eqn:U}) to the $\DR[i,j-1].U$ term of equation~(\ref{eqn:L}), we similarly obtain $\DR[i,j].L = \DR[i-1,j-1].L$.
  \qed
\end{proof}

For any $\ITOP+1 < i \leq \IBOT$ and $\JLEFT+1 < j \leq \JRIGHT$,
let $\ell$ be the smallest positive integer that satisfies 
$i-\ell = \ITOP+1$ or $j-\ell = \JLEFT+1$.
By Lemma~\ref{lem:key_lemma_diagonal},
for any cell $(i,j)$ on the bottom row $\IBOT$ or on the right column $\JRIGHT$,
we have $\DS[i,j] = \DR[i-\ell,j-\ell]$ and $\DSp[i,j] = \DRp[i-\ell,j-\ell]$.
This means that $\DS[i,j] \neq \DSp[i,j]$ iff
$\DR[i-\ell,j-\ell] \neq \DRp[i-\ell,j-\ell]$.
Thus, finding cells $(i,j)$ with $\DS[i,j] \neq \DSp[i,j]$ on
the bottom row $\IBOT$ or on the right column $\JRIGHT$
reduces to finding cells $(i',j')$ with $\DR[i',j'] \neq \DRp[i',j']$
on the row $\ITOP+1$ or on the column $\JLEFT+1$.
See Figure~\ref{fig:diagonal_propagation}.

\begin{figure}[tb]
  \centerline{
  \begin{tabular}{c}
    \begin{minipage}{0.49\hsize}
      \centering
         \includegraphics[clip,scale=0.7,bb=0 0 306 170]{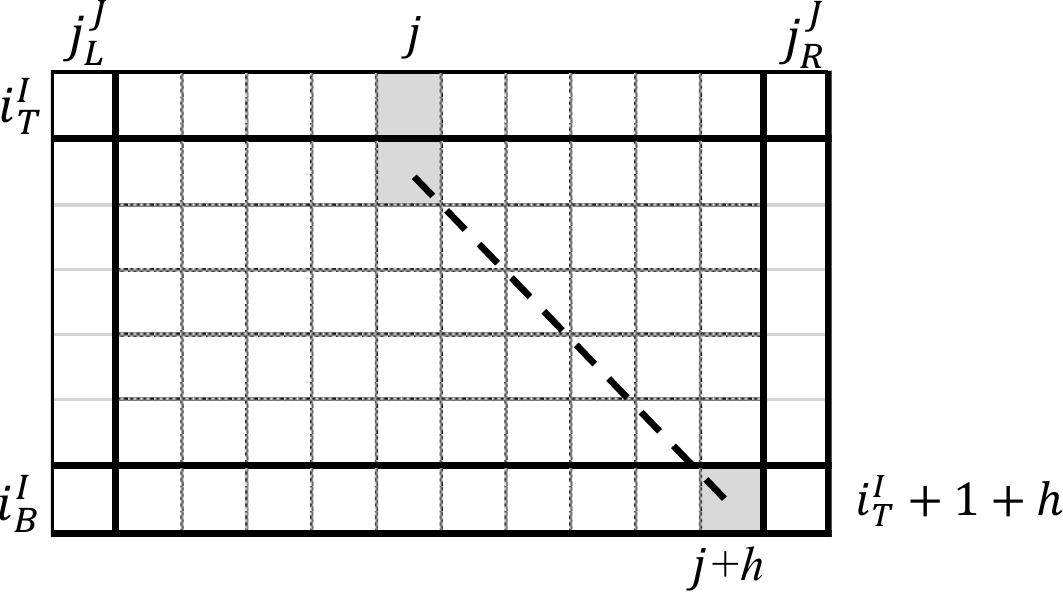}
      \caption{Diagonal propagation of $\DR[i,j] \neq \DRp[i,j]$ inside box $\B^{I,J}$.}
    \label{fig:diagonal_propagation}
    \end{minipage}
    \begin{minipage}{0.01\hsize}
        \hfill
    \end{minipage}    
    \begin{minipage}{0.49\hsize}
      \centering
         \includegraphics[clip,scale=0.7,bb=0 0 263 180]{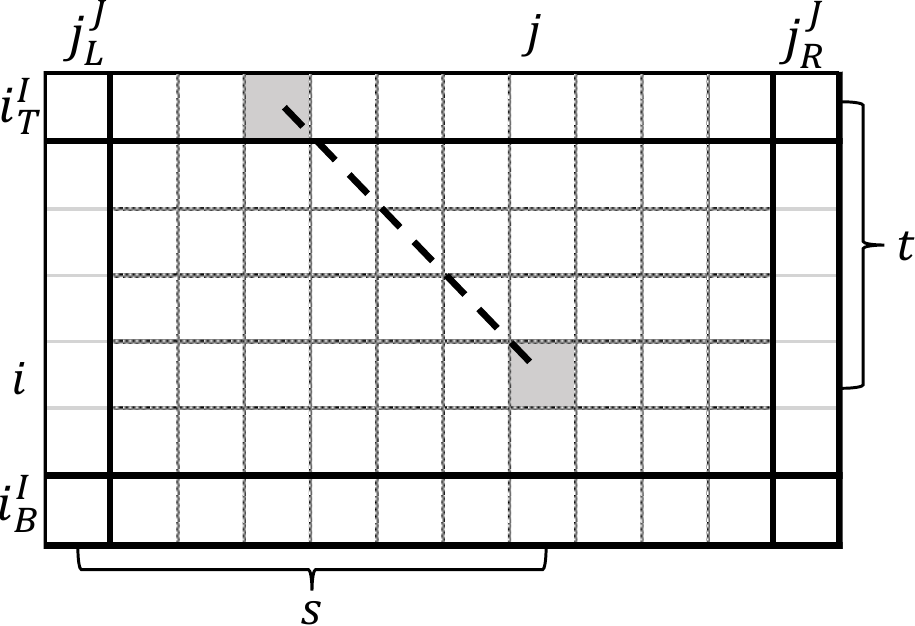}
      \caption{Illustration for the case where $s > t$ in Lemma~\ref{lem:s-t_t-s}.}    \label{fig:s-t_t-s}
   \end{minipage}
  \end{tabular}
  }
\end{figure} 

We have shown how to compute $\Tlist^{I,J}$ for the top row $\ITOP$
and $\Llist^{I,J}$ for the left column $\JLEFT$.
We here explain how to use $\Tlist^{I,J}$
(we can use $\Llist^{I,J}$ in a symmetric manner).
We process column indices in $\Tlist^{I,J}$ in increasing order,
and suppose that we are currently processing column index $\hat{j} \in \Tlist^{I,J}$
in the top row $\ITOP$ of the current box $\B^{I,J}$.
We check whether $\DR[\ITOP+1,\hat{j}] \neq \DRp[\ITOP+1,\hat{j}]$.
For this sake, we need to know the values of $\DR[\ITOP+1,\hat{j}]$ and $\DRp[\ITOP+1,\hat{j}]$.
Recall that, by Lemma~\ref{lem:key_lemma_diagonal},
$\DR[\ITOP+1,\hat{j}]$ is equal to $\DR[\ITOP+1+h, \hat{j}+h]$~($= \DS[\ITOP+1+h, \hat{j}+h]$)
on the bottom row $\IBOT$ (if $\ITOP+1+h = \IBOT$)
or on the right column $\JRIGHT$ (if $\hat{j}+h = \JRIGHT$),
where $h > 0$.
Since the cell $(\ITOP+1+h, \hat{j}+h)$ can be retrieved in constant time
by the diagonal link from the cell $(\ITOP,\hat{j}-1)$ on the
top row $\ITOP$,
we can compute $\DR[\ITOP+1,\hat{j}]$ in constant time,
applying Lemma~\ref{lem:key_lemma_diagonal} to the upper-left direction.

Computing $\DRp[\ITOP+1,\hat{j}]$ is more involved.
By Lemma~\ref{lem:DR_recursive},
we can compute $\DRp[\ITOP+1,\hat{j}]$ from $\DRp[\ITOP,\hat{j}].L$ and
$\DRp[\ITOP+1,\hat{j}-1].U$.
Since $(\ITOP,\hat{j})$ is on the top row $\ITOP$,
$\DRp[\ITOP,\hat{j}].L = \DSp[\ITOP,\hat{j}].L$ has already been computed.
Consider to compute $\DRp[\ITOP+1,\hat{j}-1].U$.
Since $\DRp[\ITOP+1,\hat{j}-1].U = \Dp[\ITOP+1,\hat{j}-1] - \Dp[\ITOP,\hat{j}-1]$,
it suffices to compute $\Dp[\ITOP,\hat{j}-1]$ and $\Dp[\ITOP+1,\hat{j}-1]$.
By definition, $\Dp[\ITOP,\hat{j}-1] = \Dp[\ITOP,\hat{j}] - \DRp[\ITOP,\hat{j}].L$.
Since $\hat{j} \in \Tlist^{I,J}$, we can retrieve the value of $\Dp[\ITOP,\hat{j}]$
from the current element of the list $\Tlist^{I,J}$, in $O(1)$ time.
Since $\DRp[\ITOP,\hat{j}].L = \DSp[\ITOP,\hat{j}].L$,
we can compute $\Dp[\ITOP,\hat{j}-1]$ in $O(1)$ time.

What remains is how to compute $\Dp[\ITOP+1,\hat{j}-1]$.
We use the next lemma.

\begin{lemma} \label{lem:s-t_t-s}
  For any cell $(i,j)$ with
  $\ITOP+1 < i \leq \IBOT$ and $\JLEFT+1 < j \leq \JRIGHT$,
  let $s = j - \JLEFT$ and $t = i - \ITOP$.
  Then,
  \[
   \D[i,j] = \D[\ITOP+\max\{t-s,0\},\JLEFT + \max\{s-t,0\}] + \min\{s,t\} \cdot (a_i-b_j)^2.
  \]
\end{lemma}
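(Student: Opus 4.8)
The plan is to prove the identity by showing that, strictly inside the box $\B^{I,J}$, the table $\D$ obeys a trivial diagonal recurrence, and then to telescope that recurrence from $(i,j)$ back along the main diagonal until it first hits the top row $\ITOP$ or the left column $\JLEFT$. The first thing I would record is that the squared cost is constant throughout $\B^{I,J}$: since $a_{i'} = a_i$ for every $\ITOP \le i' \le \IBOT$ and $b_{j'} = b_j$ for every $\JLEFT \le j' \le \JRIGHT$ (these are single runs of $A$ and $B$), we have $(a_{i'}-b_{j'})^2 = (a_i-b_j)^2 =: c$ for all cells of the box. This constancy is what makes the telescoped sum collapse into the single term $\min\{s,t\}\cdot(a_i-b_j)^2$.

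Next I would establish the diagonal recurrence: for every cell $(i',j')$ with $\ITOP < i' \le \IBOT$ and $\JLEFT < j' \le \JRIGHT$,
\[
  \D[i',j'] = \D[i'-1,j'-1] + c .
\]
To obtain this I read the three predecessors in recurrence~(\ref{recurrence}) through Corollary~\ref{coro:non-negative_DR}. We have $\D[i',j'-1] - \D[i'-1,j'-1] = \DR[i',j'-1].U \ge 0$, which is legitimate because $\ITOP < i' \le \IBOT$, and $\D[i'-1,j'] - \D[i'-1,j'-1] = \DR[i'-1,j'].L \ge 0$, legitimate because $\JLEFT < j' \le \JRIGHT$. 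Hence $\D[i'-1,j'-1]$ is the minimum of the three predecessors and the recurrence reduces to the diagonal term plus $c$. (This is exactly the step $z = (a_{i'}-b_{j'})^2$ already carried out inside the proof of Lemma~\ref{lem:key_lemma_diagonal}; here I need only the weaker conclusion about $\D$, which holds on the full interior of the box.)

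Writing $\mu = \min\{s,t\}$, I would then apply this diagonal step successively at the cells $(i,j), (i-1,j-1), \ldots, (i-\mu+1, j-\mu+1)$ and telescope to
\[
  \D[i,j] = \D[i-\mu,\, j-\mu] + \mu \cdot c .
\]
The only genuinely delicate point — and what I expect to be the main obstacle — is verifying that every cell at which the recurrence is invoked stays in the admissible region $\ITOP < i' \le \IBOT$, $\JLEFT < j' \le \JRIGHT$, so that no step slips across the top row or left column before the $\mu$-th step. The last invocation is at $(i-\mu+1, j-\mu+1)$; since $\mu = \min\{s,t\}$ one checks $i-\mu+1 \ge \ITOP+1 > \ITOP$ and $j-\mu+1 \ge \JLEFT+1 > \JLEFT$, with equality occurring in precisely the coordinate that attains the minimum, so the iteration is valid for all $\mu$ steps and lands on $(i-\mu,j-\mu)$ without overshooting.

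Finally I would identify the endpoint with the boundary cell in the statement by a two-case split that the $\max\{\cdot,0\}$ notation is designed to absorb. If $s \ge t$ then $\mu = t$ and $(i-\mu, j-\mu) = (\ITOP,\ \JLEFT + (s-t))$, which lies on the top row and matches $\D[\ITOP + \max\{t-s,0\},\ \JLEFT + \max\{s-t,0\}]$ because $\max\{t-s,0\}=0$ and $\max\{s-t,0\}=s-t$; the coefficient is $\mu = t = \min\{s,t\}$. Symmetrically, if $t \ge s$ then $\mu = s$ and $(i-\mu, j-\mu) = (\ITOP + (t-s),\ \JLEFT)$ lies on the left column and again matches the stated indices, with coefficient $\min\{s,t\} = s$. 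Substituting into the telescoped equation yields exactly the claimed identity. Since the whole argument is a finite chain of exact equalities, there is no approximation or amortization to control; getting the boundary bookkeeping right in the two cases is the entire content.
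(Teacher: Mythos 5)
Your proof is correct and takes essentially the same approach as the paper: the paper also reduces recurrence~(\ref{recurrence}) to the diagonal step $\D[i,j] = \D[i-1,j-1] + (a_i-b_j)^2$ via the run-monotonicity of $\D$ (Lemma~\ref{lem:monotone_D}, of which your Corollary~\ref{coro:non-negative_DR} argument is just the differential restatement), then telescopes along the diagonal with constant cost and merges the cases $s > t$ and $s \leq t$. Your version merely spells out the boundary bookkeeping that the paper leaves implicit.
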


\begin{proof}
  Consider the case where $s > t$.
  By applying Lemma~\ref{lem:monotone_D} to recurrence~(\ref{recurrence}),
  we obtain $\D[i,j] = \D[i-1,j-1] + (a_i-b_j)^2$.
  Since $a_i = a_{i'}$ and $b_j = b_{j'}$ for $\ITOP < i' < i$
  and $\JLEFT < j' < j$,
  by repeatedly applying Lemma~\ref{lem:monotone_D} to the above equation,
  we get $\D[i,j] = \D[\ITOP,\JLEFT+(s-t)] + t \cdot(a_i-b_j)^2$.
  See also Figure~\ref{fig:s-t_t-s}.
  The case $s \leq t$ is similar and
  we obtain $\D[i,j] = \D[\ITOP+(t-s),\JLEFT] + s \cdot (a_i-b_j)^2$.
  By merging the two equations for $s > t$ and $s \leq t$,
  we obtain the desired equation.
  \qed
\end{proof}

Let $k = \hat{j} - \JLEFT$.
Since $\JLEFT+1 < \hat{j}$, $k \geq 2$.
Since $s=\hat{j}-1-\JLEFT=k-1$, $t=\ITOP+1-\ITOP=1$, and $k \geq 2$,
we get $s \geq t$.
Thus it follows from Lemma~\ref{lem:s-t_t-s}
that
$$\Dp[\ITOP+1,\hat{j}-1] \! = \! \Dp[\ITOP,\JLEFT+(k-2)] + (A[\ITOP]-B[\hat{j}])^2 \! = \! \Dp[\ITOP,\hat{j}-2]+(A[\ITOP]-B[\hat{j}])^2.$$
Since the value $\Dp[\ITOP,\hat{j}]$ is already computed
and stored in the corresponding element of $\Delta_T^{I,J}$,
we can compute, in $O(1)$ time, $\Dp[\ITOP,\hat{j}-2]$ by
\begin{eqnarray*}
  \Dp[\ITOP,\hat{j}-2] & = & \Dp[\ITOP,\hat{j}] - \DRp[\ITOP,\hat{j}].L-\DRp[\ITOP,\hat{j}-1].L \\
  & = & \Dp[\ITOP,\hat{j}] - \DSp[\ITOP,\hat{j}].L-\DSp[\ITOP,\hat{j}-1].L.
\end{eqnarray*}
Thus, we can determine in $O(1)$ time
whether $\DR[\ITOP+1,\hat{j}] \neq \DRp[\ITOP+1,\hat{j}]$,
and hence whether $\DS[\ITOP+1+h,\hat{j}+h] \neq \DSp[\ITOP+1+h,\hat{j}+h]$.

Suppose $\DS[\ITOP+1+h,\hat{j}+h] \neq \DSp[\ITOP+1+h,\hat{j}+h]$.
Then we need to compute $\Dp[\ITOP+1+h,\hat{j}+h]$.
This can be computed in constant time using Lemma~\ref{lem:s-t_t-s},
by $\Dp[\ITOP+1+h,\hat{j}+h]=\Dp[\ITOP,\hat{j}-1]+(h+1) \cdot (A[\ITOP]-B[\hat{j}])^2$, where $\Dp[\ITOP,\hat{j}-1]=\Dp[\ITOP,\hat{j}]-\DRp[\ITOP,\hat{j}].L$.
We add the column index $\hat{j}+h$ to list $\Blist^{I,J}$ if $\ITOP+1+h = \IBOT$,
and/or add the row index $\ITOP+1+h$ to list $\Rlist^{I,J}$ if $\hat{j}+h = \JRIGHT$,
together with the value of $\Dp[\ITOP+1+h,\hat{j}+h]$.

\begin{figure}[b!]
    \centering
    \includegraphics[clip,width=0.99\linewidth,bb=0 0 698 378]{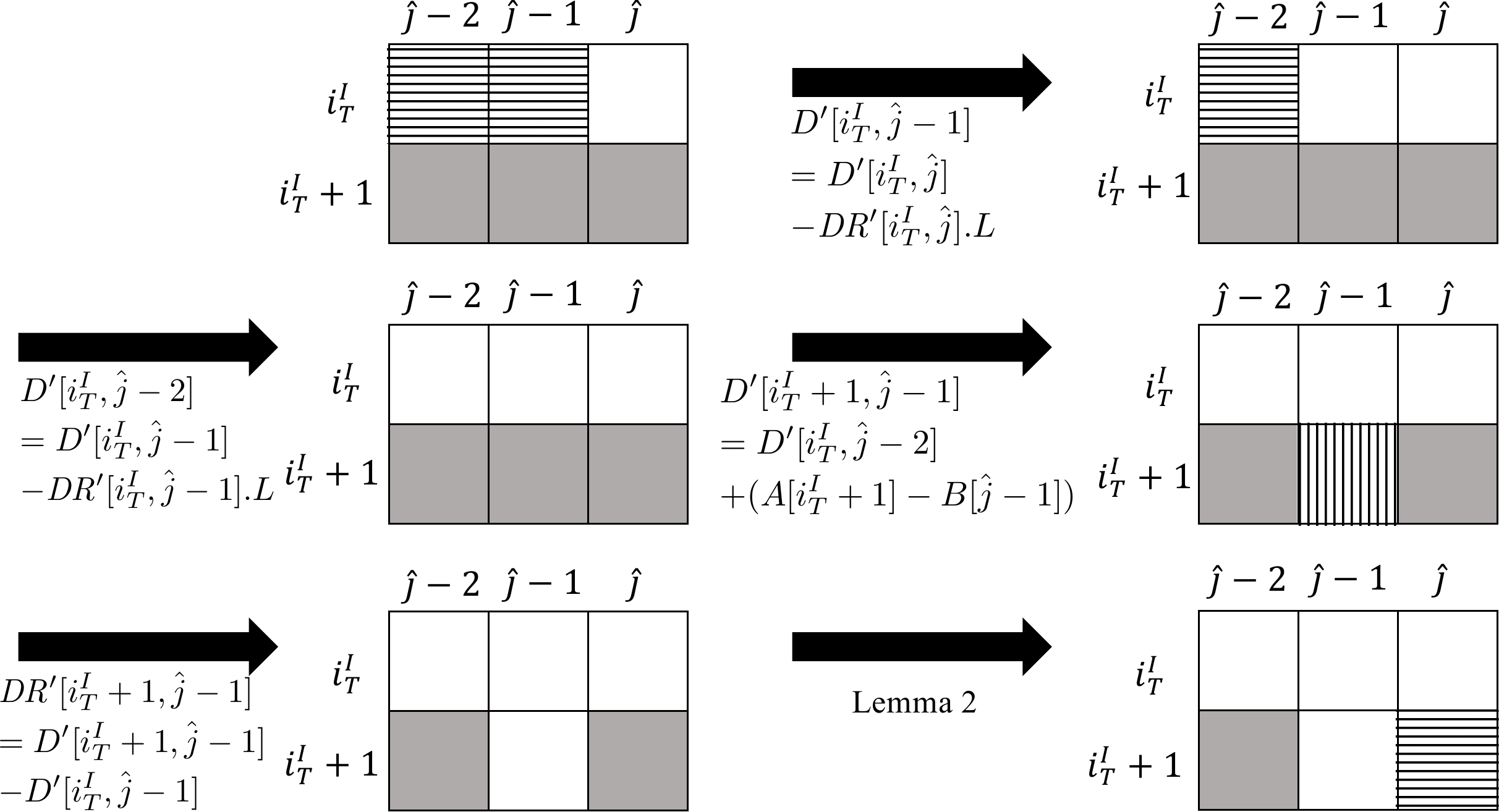}
    \caption{Illustration for the process of computing $\DRp[\ITOP+1,\hat{j}]$. The gray cells show those for which both values of $\Dp$ and $\DRp$ are unknown, the vertically striped cells show those for which only the value of $\Dp$ is known, the horizontally striped cells show those for which only the value of $\DRp$ is known, and the white cells show those for which both values of $\Dp$ and $\DRp$ are known. At the final step (lower-right), the desired value $\DRp[\ITOP+1,\hat{j}]$ has been computed.}
    \label{fig:computation_flow}
\end{figure}

The above process of computing $\DRp[\ITOP+1,\hat{j}]$
is illustrated in Figure~\ref{fig:computation_flow}.
Suppose we have processed cell $(\ITOP+1,\hat{j})$.
We perform the same procedure as above
for the right-neighbor cells $(\ITOP+1,\hat{j}+q)$
with $q = 1$ and increasing $q$,
until encountering the first cell $(\ITOP+1,\hat{j}+q)$
such that (1) $\DR[\ITOP+1,\hat{j}+q] = \DRp[\ITOP+1,\hat{j}+q]$,
(2) $\hat{j}+q \in \Tlist^{I,J}$,
or (3) $\hat{j}+q = \JRIGHT+1$.
In cases (1) and (2), we remove $\hat{j}$ from $\Tlist^{I,J}$
and move to the next element of in $\Tlist^{I,J}$.
We are done when we encounter case (3) or $\Tlist^{I,J}$ becomes empty.
By Lemma~\ref{lem:key_lemma_diagonal},
the total number of cells $(\ITOP+1,\hat{j}+q)$
for all boxes in $\DSp$ is $O(\changed)$.

\vspace*{0.5pc}
\noindent \textbf{Batched updates.}
Our algorithm can efficiently support \emph{batched updates}
for insertion, deletion, substitution of a run of characters.
\begin{theorem} \label{theo:batched_update}
  Let $\Bp$ be the string after
  a run-wise edit operation on $B$, and let $n' = |\Bp|$.
  $\DS$ can be updated to $\DSp$ in $O(m+\max\{n,n'\}+\changed')$ time
  where $\changed'$ denotes the number of cells where
  the values differ between $\DS$ and $\DSp$.
\end{theorem}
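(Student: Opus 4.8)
The plan is to reuse the framework of Theorem~\ref{theo:main_theorem} almost verbatim and to confine all new work to the initial step that repairs the box structure of $\DS$ around the edited run. Once that step has reconstructed the boundary cells and diagonal links of the box-columns touching the edit, the remaining boxes are swept left-to-right and top-to-bottom exactly as before, using Lemmas~\ref{lem:DR_recursive}, \ref{lem:DR_propagation}, \ref{lem:key_lemma_diagonal}, and~\ref{lem:s-t_t-s}; none of these lemmas distinguishes a single-character edit from a run-wise one, since they rest only on the local recurrence and on the diagonal constancy of $\DR$ inside a box. The only quantity that changes is the column offset, which is now the net length difference $n-n'$ rather than a value in $\{-1,0,1\}$; this is a constant shift used only to align columns and does not affect any per-cell cost. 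The propagation phase therefore visits only changed cells together with $O(1)$ stopper cells per maximal changed segment, and costs $O(\changed')$. It thus suffices to show that the initial step runs in $O(m+\max\{n,n'\})$ time.

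First I would classify the run-wise operations by their effect on the column structure of $\DS$. Inserting a run whose character equals its host run merely widens one column of boxes; inserting a run of a fresh character that matches neither neighbour, or substituting a run without creating a match, splits one box-column into at most three and introduces the two boundary columns of the new run; deleting a run, or substituting a run so that its new character coincides with an adjacent run, removes a run and coalesces neighbouring box-columns. In every case the number of boundary columns of $\DS$ that are created, shifted, or destroyed is $O(1)$. Each such column has height $m$, so its $\DSp$ entries are filled top-to-bottom in $O(m)$ time by Lemma~\ref{lem:DR_recursive} once a single anchor value on the first row is known, and the anchors are obtained by one scan of whichever first row is longer, in $O(\max\{n,n'\})$ time. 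As in the initial step of Theorem~\ref{theo:main_theorem}, the diagonal links that move in the widening and splitting cases all have destinations within a single column of $\DSp$, so repairing them over all $M$ boxes of the affected column costs $O(m)$ in total.

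The hard part will be the coalescing sub-cases of deletion and substitution, which have no exact counterpart in the single-character analysis. When a deleted run lies between two runs carrying the same character, or a substituted run acquires a neighbour's character, two runs merge (three, if both neighbours agree); the $O(1)$ interior boundary columns separating them must be spliced out of $\DS$, the box-columns $\B^{I,J-1}$ and $\B^{I,J+1}$ must be fused for every $I$, and the diagonal links of each fused box must be re-threaded so as to skip the removed columns. I would argue that the links which actually change are precisely those whose diagonal path previously terminated on one of the removed interior boundary columns; within a single box $\B^{I,J-1}$ their number is at most the box height $k_I$, and since $\sum_{I} k_I = m$ the total number of changed diagonal links is $O(m)$. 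These are re-threaded by scanning each fused box from its right edge leftwards and recomputing exit cells, halting at the first link that is already correct, for $O(m)$ work overall; splicing out the $O(1)$ interior columns costs $O(m)$ each. The value discrepancies introduced inside the fused boxes are then detected and corrected by the ordinary propagation phase and are charged to $\changed'$.

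Combining the pieces, the initial step costs $O(m+\max\{n,n'\})$ and the propagation phase costs $O(\changed')$, which yields the claimed $O(m+\max\{n,n'\}+\changed')$ bound. The main obstacle, and the place where the single-character proof must genuinely be extended, is the run-coalescing case; its crux is the charging argument above, which keeps the diagonal re-threading at $O(m)$ by bounding the changed links per box by the box height.
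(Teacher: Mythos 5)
The paper states Theorem~\ref{theo:batched_update} without any proof (the appendix proves only Theorem~\ref{theo:worst_case_changed} and Lemma~\ref{lem:updating_DP-table_worst-case}), so your proposal can only be judged against the machinery of Theorem~\ref{theo:main_theorem}. Measured that way, it has a genuine gap, and the gap sits exactly where the batched case differs from the single-character case --- not in the coalescing sub-case you single out as ``the hard part,'' but in the widening/splitting sub-cases. Recall that $\DS$ stores the $2M$ boundary rows $\ITOP,\IBOT$ \emph{in full}; that is precisely why $|\DS|=\Theta(mN+nM)$. Inserting (or substituting in) a run of length $k\geq 2$ therefore creates $\Theta(Mk)$ brand-new cells of $\DSp$: every stored boundary row gains $k$ cells inside the new box-column, not just the $O(1)$ new boundary columns that your accounting covers. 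Since $Mk$ can far exceed $m+\max\{n,n'\}$ (take $m=M$ and $k=\Theta(n)$), your claim that the initial step runs in $O(m+\max\{n,n'\})$ is false as stated. The theorem survives only because these cells exist in $\DSp$ but have no counterpart in $\DS$, hence count toward $\changed'$; a correct proof must explicitly charge their construction to $\changed'$, a charging argument your proposal never makes --- on the contrary, it asserts that all work outside the propagation phase fits in $O(m+\max\{n,n'\})$.

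A second, related flaw: you claim each new boundary column is ``filled top-to-bottom in $O(m)$ time by Lemma~\ref{lem:DR_recursive} once a single anchor value on the first row is known.'' Lemma~\ref{lem:DR_recursive} does not work from an anchor; at every row $i$ it needs the left-neighbor value $\DRp[i,j-1].U$. In the single-character cases this neighbor is always at hand: in case (a) the reused old column $\JRIGHT$ is literally the left neighbor of the new column, because $B'[1..\JRIGHT]=B[1..\JRIGHT]$ when a character is inserted into its own run, and in case (c) every new box has width one or lies left of the edit where the prefix is unchanged. But the right boundary column of a newly created box of width $k\geq 2$ has an \emph{interior, unstored} column as its left neighbor, and the unchanged-prefix argument does not apply because all $k$ of its columns are new. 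Filling it requires first computing the box's boundary-row segments and its second row (index $\ITOP+1$) or second column, and then invoking the diagonal constancy of Lemma~\ref{lem:key_lemma_diagonal} --- again $\Theta(k)$ work per box, $\Theta(Mk)$ in total, chargeable only to $\changed'$. Your treatment of coalescing (bounding the re-threaded diagonal links per fused box by the box height, which sums to $O(m)$ over a box-column) is sound; it is the wide new box-column that your argument does not actually handle.
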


Since $n'$ is the length of the string $|\Bp|$ after modification,
$\changed'$ in Theorem~\ref{theo:batched_update}
is bounded by $O(mN+\max\{n',n\}M)$.
Thus, we can perform a batched run-wise update on our sparse table $\DS$ in 
worst-case $O(m+\max\{n,n'\}+\changed') \subseteq O(mN + \max\{n,n'\}M)$ time.
Let $k$ be the total number of characters
that are involved in a run-wise batched edit operation from $B$ to $\Bp$
(namely, a run of $k$ characters is inserted,
a run of $k$ characters is deleted,
or a run of $k_1$ characters is substituted for a run of $k_2$ characters
with $k = k_1 + k_2$).
Then a na\"ive $k$-time applications of Theorem~\ref{theo:main_theorem}
to the run-wise batched edit operation
requires $O(k(m+n+\changed)) \subseteq O(k(mN+nM))$ time.
Since $n' \leq n + k$,
the batched update of Theorem~\ref{theo:batched_update}
is faster than the na\"ive method by a factor of $k$
whenever $k \in O(n)$.
We also remark that our batched update algorithm
is at least as efficient as
building the sparse DP table of Froese et al.'s algorithm~\cite{FroeseJRW2020}
from scratch using $\Theta(mN+\max\{n,n'\}M)$ time and space.

\subsection{Evaluation of $\changed$} \label{sec:evaluation}

As was proven previously,
our $\ddtw$ algorithm works in $O(m+n+\changed)$ time
per edit operation on one of the strings.
In this subsection, we analyze how large the $\changed$ would be
in theory and practice.
Although $\changed = \Theta(mN + nM)$ in the worst case
for some strings (Theorem~\ref{theo:worst_case_changed}),
our preliminary experiments shown below suggest
that $\changed$ can be much smaller than $mN + nM$ in many cases.

\begin{theorem} \label{theo:worst_case_changed}
Consider strings $A = A_1^{k} \cdots A_M^{k}$ and $B = B_1^{l} \cdots B_N^{l}$
of RLE sizes $M$ and $N$, respectively,
where $|A| = m = kM$ and $|B| = n = lN$.
We assume lexicographical orders of characters
as $A_{I-1} > A_I$ for $1 < I \leq M$,
$B_{J-1} < B_J$ for $1 < J \leq N$, and $A_M > B_N$.
If we delete $B[1]$ from $B$, then $\changed = \Omega(mN + nM)$.
\end{theorem}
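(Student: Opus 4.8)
The plan is to reduce the counting of $\changed$ to counting the ``jumps'' of a single potential, and then to exhibit $\Omega(mN+nM)$ such jumps on the boundary of $\DS$. Since $B[1]$ is deleted, the offset is $+1$, so column $j$ of $\DSp$ corresponds to column $j+1$ of $\DS$; define $\Phi[i,j]=\D[i,j+1]-\Dp[i,j]$ for $1\le i\le m$, $1\le j\le n-1$. Expanding the two fields of $\DR$ and $\DRp$ gives $\Phi[i,j]-\Phi[i,j-1]=\DR[i,j+1].L-\DRp[i,j].L$ and $\Phi[i,j]-\Phi[i-1,j]=\DR[i,j+1].U-\DRp[i,j].U$. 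Hence, modulo the usual conventions on the first row and first column, a boundary cell $(i,j)$ of $\DSp$ has $\DS[i,j+1]\neq\DSp[i,j]$ \emph{iff} $\Phi[i,j]$ differs from its left neighbor $\Phi[i,j-1]$ or its top neighbor $\Phi[i-1,j]$. Thus $\changed$ equals the number of boundary cells at which $\Phi$ fails to be locally constant, and it suffices to produce $\Omega(mN+nM)$ such jumps along the run boundaries.

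\textbf{Pinning down $\D$ and $\Dp$.} Next I would exploit the hypotheses $\alpha_1>\cdots>\alpha_M>\beta_N>\cdots>\beta_1$ (writing $\alpha_I,\beta_J$ for the run characters), which make every cost $\gamma_{I,J}=(\alpha_I-\beta_J)^2$ strictly decreasing in both $I$ and $J$, hence positive. By Lemma~\ref{lem:monotone_D} (equivalently Corollary~\ref{coro:non-negative_DR}), inside every box the diagonal predecessor attains the minimum of recurrence~(\ref{recurrence}), so $\D[i,j]=\D[i-1,j-1]+\gamma_{I,J}$ for $\ITOP<i\le\IBOT$, $\JLEFT<j\le\JRIGHT$, and Lemma~\ref{lem:s-t_t-s} then expresses every interior value through the top row and left column of its box. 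Propagating these entry values across boxes yields a closed recurrence for $\D$ on the run boundaries; applying the same derivation to $B'=B_1^{l-1}B_2^{l}\cdots B_N^{l}$ yields $\Dp$. Subtracting gives an explicit block description of $\Phi$: on each box it is a base value plus a diagonal staircase arising from the $\min\{s,t\}$ terms of Lemma~\ref{lem:s-t_t-s}, and, crucially, the base value of $\Phi$ changes between suitable adjacent boxes because the governing $\gamma_{I,J}$ are distinct.

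\textbf{Harvesting the jumps.} Finally I would read off the jumps from these adjacencies. For each $J$ with $2\le J\le N$, consider the left-boundary column $j=\JLEFT$ of run $J$ in $B'$: the explicit form of $\Phi$ shows $\Phi[i,\JLEFT]\neq\Phi[i,\JLEFT-1]$ for a constant fraction of the rows $i$, so this column carries $\Omega(m)$ changed boundary cells; summing over the $N-1$ such columns gives $\Omega(mN)$. Symmetrically, for each top-boundary row $i=\ITOP$ of run $I$ with $2\le I\le M$, the profiles of $\Phi$ on the bottom row of run $I-1$ and on the top row of run $I$ differ on a constant fraction of the columns, contributing $\Omega(n)$ cells per row and $\Omega(nM)$ in all. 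As the first family lies on $B$-run boundary columns and the second on $A$-run boundary rows, the two counts are essentially disjoint, so $\changed=\Omega(mN+nM)$.

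\textbf{Main obstacle.} The hard part is the non-cancellation step: although $\D$ itself changes at almost every cell after the deletion, $\changed$ only registers changes in the \emph{differences}, which could in principle cancel so that $\Phi$ remains locally constant (as it indeed does in the unchanged prefix and along parts of the first row). Overcoming this is exactly where the strict monotonicity of $\gamma_{I,J}$ enters --- it forces the per-box base values of $\Phi$ to be genuinely distinct across the relevant adjacencies, so that a full box edge of length $\Theta(k)$ or $\Theta(l)$ of jumps survives instead of collapsing to $O(1)$ jumps. I expect the bulk of the effort to be the bookkeeping that identifies $\Omega(MN)$ adjacencies each yielding a full edge of jumps and certifies, from the closed form of $\Phi$, that no cancellation occurs there; the reduction and the diagonal-dominance are comparatively routine.
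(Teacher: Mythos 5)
Your reduction of $\changed$ to counting jumps of the potential $\Phi[i,j]=\D[i,j+1]-\Dp[i,j]$ is correct, and your overall plan (use the ordering hypotheses to force non-cancellation of differences along run-boundary rows and columns) points in the same direction as the paper's proof. But the proposal has a genuine gap: both of its load-bearing steps are asserted rather than proved. First, the ``closed recurrence for $\D$ on the run boundaries'' does not follow from Lemma~\ref{lem:monotone_D} (equivalently Corollary~\ref{coro:non-negative_DR}) and Lemma~\ref{lem:s-t_t-s} alone: those results hold for \emph{arbitrary} strings and only govern cells strictly inside a box, so they cannot by themselves yield a conclusion that depends on the hypotheses $A_{I-1}>A_I$, $B_{J-1}<B_J$, $A_M>B_N$. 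What is needed is to know which of the three predecessors attains the minimum in recurrence~(\ref{recurrence}) at the box-\emph{boundary} cells (top rows and left columns of boxes), where Lemma~\ref{lem:monotone_D} says nothing. The paper supplies exactly this in Lemma~\ref{lem:path_min}: for these strings the minimum is attained by the left neighbor when $i<j$, the top neighbor when $i>j$, and the diagonal neighbor when $i=j$, proved by a five-case induction that uses the ordering hypotheses. Without such a statement, your ``explicit block description of $\Phi$'' does not exist and nothing can be read off from it.

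Second, the non-cancellation step --- which you yourself flag as the main obstacle --- is precisely the content of the theorem, and your proposal defers it (``I expect the bulk of the effort to be the bookkeeping\dots''). The paper resolves it concretely: by Lemma~\ref{lem:path_min}, the optimal path to a cell $(i,j)$ with $1<i<j$ on a top row $\ITOP$ runs diagonally to $(i,i-1)$ (before the deletion) or to $(i,i)$ (after it) and then straight right; comparing the two path decompositions collapses the difference to two single edge weights, $\DS[i,j].U-\DSp[i,j].U=(A[i-1]-B[i-1])^2-(A[i]-B[i-1])^2$, which is nonzero because $A[i-1]\neq A[i]$ and all the relevant character differences are positive. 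Summing over the $M-1$ interior top rows, and symmetrically over the $N-1$ left boundary columns, then gives $\changed=\Omega(nM+mN)$. So your plan has the right shape, but completing it requires (i) proving a boundary-cell analogue of Lemma~\ref{lem:path_min} from the ordering hypotheses, and (ii) carrying out an explicit difference computation certifying that the jumps do not cancel; neither is routine bookkeeping, and together they constitute essentially the entire proof.
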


We have also conducted preliminary experiments to estimate
practical values of $\changed$, using randomly generated strings.
For simplicity, we set $m = n$ and $M = N$ for all experiments.
We fixed the alphabet size $|\Sigma| = 26$ throughout our experiments.
\begin{figure}[h!]
    \centering
    \includegraphics[width=0.7\linewidth,bb=0 0 414 224]{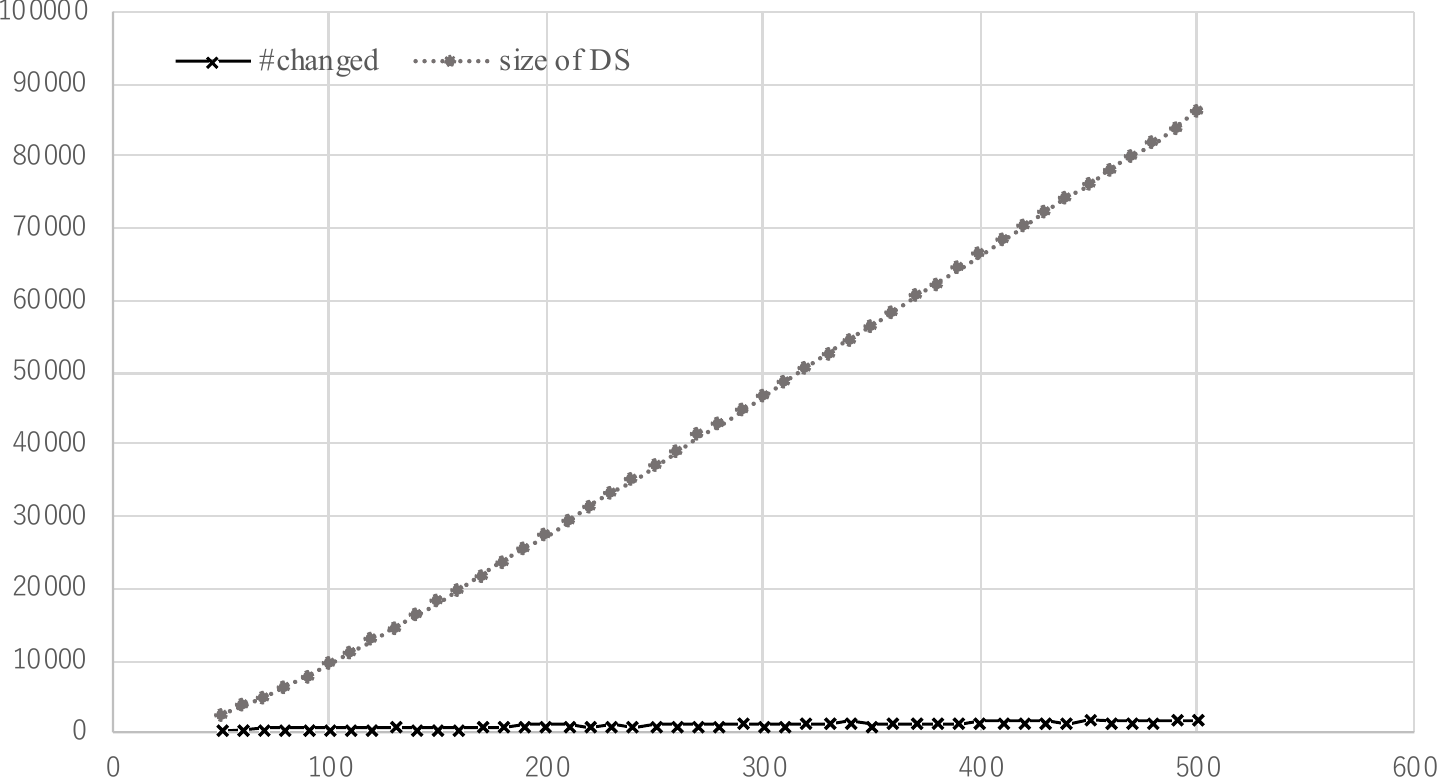}

    \vspace{1pc}
    
    \includegraphics[width=0.7\linewidth,bb=0 0 346 179]{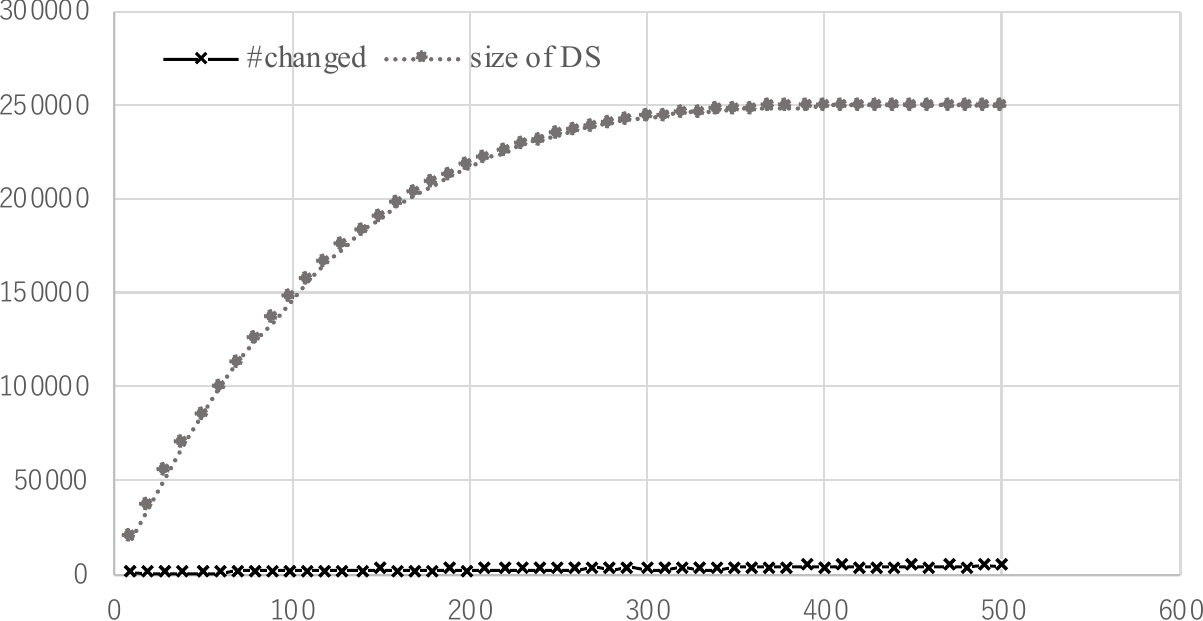}
    \caption{Comparisons of the values of $\changed$ and the sizes 
    of the sparse table $\DS$ on two randomly generated strings $A$ and $B$. Upper: With fixed RLE size $N = M = 50$ and varying lengths $n = m$ from $50$ to $500$ (horizontal axis). Lower: With fixed length $n = m = 500$ and varying RLE sizes $N = M$ from $10$ to $500$ (horizontal axis).}
    \label{fig:experiments}
\end{figure}
In the first experiment, we fixed the RLE size $M = N = 50$,
randomly generated two strings $A$ and $B$ of varying lengths $m = n$
from $50$ to $500$,
and compared the values of $\changed$ and the sizes of $\DS$.
For each $m$,
we randomly generated 50 pairs of strings $A$ and $B$ of length $m$ each,
and took the average values for $\changed$ and the sizes of $\DS$
when $B[1]$ was deleted from $B$.
In the second experiment, we fixed the string length
$m = n = 500$ and randomly generated two strings $A$ and $B$ of varying
RLE sizes $M = N$ from $10$ to $500$.
For each $M$,
we randomly generated 50 pairs of strings $A$ and $B$ of RLE size $M$,
and took the average values for $\changed$ and the sizes of $\DS$
when $B[1]$ was deleted from $B$.
The results are shown in Figure~\ref{fig:experiments}.
In both experiments, $\changed$ is much smaller than the size of $\DS$.
It is noteworthy that even when the values of $M$~($= N$) and $m$~($= n$)
are close, the value of $\changed$ stayed very small.
This suggests that our algorithm can be fast also on
strings that are \emph{not} RLE-compressible.

\section*{Acknowledgments}

This work was supported by JSPS KAKENHI Grant Numbers
JP18K18002 (YN),
JP17H01697 (SI),
JP20H04141 (HB), JP18H04098 (MT),
and JST PRESTO Grant Number JPMJPR1922 (SI).

\bibliographystyle{abbrv}
\bibliography{ref}

\clearpage
\appendix

\section{Appendix: Omitted proofs} \label{sec:omitted_proofs}

\subsection{Proof for Theorem~\ref{theo:worst_case_changed}}

To prove Theorem~\ref{theo:worst_case_changed},
we establish the following lemma.

\begin{lemma} \label{lem:path_min}
  Let $\D$ be the dynamic programming table for the above strings $A$ and $B$.
  Then,
  \[
  \min\{\D[i-1,j], \D[i,j-1], \D[i-1,j-1]\} = 
  \begin{cases}
    \D[i,j-1] & \mbox{if } i < j, \\
    \D[i-1,j] & \mbox{if } i > j, \\
    \D[i-1,j-1] & \mbox{if } i = j. \\
  \end{cases}
  \]
\end{lemma}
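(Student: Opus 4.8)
The plan is to deduce the lemma from two sign invariants on the difference fields $\DR[i,j].U=\D[i,j]-\D[i-1,j]$ and $\DR[i,j].L=\D[i,j]-\D[i,j-1]$ of $\D$, namely: $\DR[i,j].U\le 0$ when $i\le j$ and $\DR[i,j].U\ge 0$ when $i>j$, and dually $\DR[i,j].L\ge 0$ when $i<j$ and $\DR[i,j].L\le 0$ when $i\ge j$. The only property of the strings I would use is the one guaranteed by the hypotheses of Theorem~\ref{theo:worst_case_changed}: the sequence $a_1,\dots,a_m$ is nonincreasing, the sequence $b_1,\dots,b_n$ is nondecreasing, and $\min_i a_i>\max_j b_j$, so that every difference $a_i-b_j$ is positive. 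Consequently $a_i-b_j$ is positive and nonincreasing in each coordinate, whence the cost $(a_i-b_j)^2$ is nonincreasing in each coordinate, i.e. $(a_i-b_j)^2\le (a_{i-1}-b_j)^2$ and $(a_i-b_j)^2\le (a_i-b_{j-1})^2$. This cost-monotonicity is the engine of the whole argument.

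Granting the invariants, the lemma is immediate by chaining. If $i<j$, then the invariant at $(i,j-1)$ gives $\D[i,j-1]\le\D[i-1,j-1]$ (since $i\le j-1$, so $\DR[i,j-1].U\le 0$), and the invariant at $(i-1,j)$ gives $\D[i-1,j-1]\le\D[i-1,j]$ (since $i-1<j$, so $\DR[i-1,j].L\ge 0$); together these make $\D[i,j-1]$ the minimum of the three neighbors. The case $i>j$ is the mirror chain $\D[i-1,j]\le\D[i-1,j-1]\le\D[i,j-1]$, and the case $i=j$ gives $\D[i-1,j-1]\le\D[i,j-1]$ and $\D[i-1,j-1]\le\D[i-1,j]$, so the diagonal neighbor wins. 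Note that only weak inequalities are needed, so ties among the neighbors are harmless.

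It remains to prove the invariants, which I would do by strong induction on $i+j$, interleaved with the lemma. For the first row and column the differences equal single costs $(a_1-b_j)^2$ and $(a_i-b_1)^2$ by recurrence~(\ref{recurrence}), which carry the required signs. For an interior cell $(i,j)$, assuming the invariants at all cells of smaller index-sum, I first apply the chaining above — which only reads $(i,j-1)$ and $(i-1,j)$, both of sum $i+j-1$ — to identify the minimizing neighbor, so that recurrence~(\ref{recurrence}) collapses to a single additive step ($\D[i,j]=\D[i,j-1]+(a_i-b_j)^2$ when $i<j$, and symmetrically otherwise). From this collapse the new differences follow locally: when $i<j$ one gets $\DR[i,j].L=(a_i-b_j)^2\ge 0$ directly, while subtracting the collapsed recurrences at $(i,j)$ and $(i-1,j)$ gives $\DR[i,j].U=\DR[i,j-1].U+\big((a_i-b_j)^2-(a_{i-1}-b_j)^2\big)$, in which $\DR[i,j-1].U\le 0$ by induction and the bracketed term is $\le 0$ by row-wise cost-monotonicity, so $\DR[i,j].U\le 0$. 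The region $i>j$ is symmetric (exchange the roles of $.U$ and $.L$, and of rows and columns), and the diagonal case reduces to $\DR[i,i].U=(a_i-b_i)^2-(a_{i-1}-b_i)^2\le 0$ and $\DR[i,i].L=(a_i-b_i)^2-(a_i-b_{i-1})^2\le 0$. This closes the induction and hence the lemma.

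The one delicate point is the apparent circularity: the lemma is exactly what lets us collapse recurrence~(\ref{recurrence}), yet that collapse is what we use to compute the differences whose signs are the invariant. The induction on $i+j$ severs this loop cleanly, since the chaining that fixes the minimizer at $(i,j)$ consults only differences of index-sum $i+j-1$, and the signs of the differences at $(i,j)$ are then read off from the now-known minimizer at $(i,j)$ together with data of smaller sum. The residual bookkeeping — verifying the first row and column, and treating the boundary subcases $i-1=1$ or $j-1=1$ via the plain additive recurrence rather than the $\min$ — is routine. I expect the main conceptual step to be pinning down the precise invariant (the weak sign conditions keyed to the comparison of $i$ and $j$); once that is in place, every subsequent computation is mechanical and uses nothing beyond cost-monotonicity.
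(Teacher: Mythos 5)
Your proposal is correct and follows essentially the same route as the paper: the paper's proof is likewise an induction over the table that uses the inductive hypothesis to collapse recurrence~(\ref{recurrence}) at neighboring cells and then invokes cost-monotonicity (e.g.\ $(a_{i-1}-b_{j-1})^2 \geq (a_i-b_{j-1})^2$, valid since $a_{i-1} \geq a_i > b_{j-1}$), and the inequalities it derives case by case, such as $\D[i-1,j] \geq \D[i-1,j-1]$ and $\D[i-1,j-1] \geq \D[i,j-1]$, are exactly your sign invariants on $\DR[\cdot,\cdot].L$ and $\DR[\cdot,\cdot].U$. If anything, your packaging is slightly cleaner: making the sign invariants the explicit induction hypothesis and doing strong induction on $i+j$ avoids a small imprecision in the paper, whose stated hypothesis (``holds for any $(i',j')$ with $i'<i$ and $j'<j$'') does not literally cover cells such as $(i-1,j)$ and $(i,j-1)$ to which its case analysis applies it.
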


\begin{proof}
  By recurrence~(\ref{recurrence}),
  the argument holds for any cells $(1,j)$ and $(i,1)$,
  where $1 \leq i \leq m$ and $1 \leq j \leq n$.
  For any cell $(i,j)$ with $i > 1$ or $j > 1$,
  suppose that the argument holds for any $(i', j')$ with $i' < i$ and $j' < j$.
  We consider the five following cases:
  \begin{enumerate}
  \item \label{case:i<j-1}
    Case $i < j-1$:
    For the cell $(i-1,j)$,
    it follows from the inductive hypothesis and recurrence~(\ref{recurrence})
    that $\D[i-1,j] = \D[i-1,j-1] + (a_{i-1}-b_{j})^2$.
    Since $(a_{i-1}-b_{j})^2 \geq 0$, $\D[i-1,j] \geq \D[i-1,j-1]$.
    Similarly, for the cells $(i-1,j-1)$ and $(i,j-1)$,
    $\D[i-1,j-1] = \D[i-1,j-2] + (a_{i-1}-b_{j-1})^2$ and
    $\D[i,j-1] = \D[i,j-2]+(a_{i}-b_{j-1})^2$.
    Since $a_{i-1} \geq a_{i}$, $a_{i-1}-b_{j-1} \geq a_{i}-b_{j-1}$.
    Because $a_{i}-b_{j-1}>0$, 
    it holds that $(a_{i-1}-b_{j-1})^2 \geq (a_{i}-b_{j-1})^2$.
    By the inductive hypothesis, $\D[i-1,j-2] \geq \D[i,j-2]$.
    Thus we have $\D[i-1,j-2]+(a_{i-1}-b_{j-1})^2 \geq \D[i,j-2]+(a_{i}-b_{j-1})^2$,
    which implies $\D[i-1,j-1] \geq \D[i,j-1]$.

  \item \label{case:i=j-1}
    Case $i=j-1$:
    Analogously to Case~\ref{case:i<j-1},
    we get $D[i-1,j] \geq D[i-1,j-1]$.
    For the cells $(i-1,j-1)$ and $(i,j-1)$,
    it follows from the inductive hypothesis and recurrence~(\ref{recurrence})
    that $\D[i-1,j-1] = \D[i-1,j-2] + (a_{i-1}-b_{j-1})^2$
    and $\D[i,j-1] = \D[i-1,j-2] + (a_{i}-b_{j-1})^2$.
    Since $(a_{i-1}-b_{j-1})^2 \geq (a_{i}-b_{j-1})^2$,
    $\D[i-1,j-2]+(a_{i-1}-b_{j-1})^2 \geq \D[i-1,j-2]+(a_{i}-b_{j-1})^2$,
    which implies $\D[i-1,j-1] \geq \D[i,j-1]$.

  \item Case $i-1>j$: By symmetric arguments to Case~\ref{case:i<j-1}.

  \item Case $i-1=j$: By symmetric arguments to Case~\ref{case:i=j-1}.
    
  \item Case $i=j$:
    For the cells $(i-1,j)$ and $(i,j-1)$,
    by the inductive hypothesis 
    $\min\{\D[i-2,j],\D[i-1,j-1],\D[i-1,j]\}=D[i-1,j-1]$ and
    $\min\{\D[i-1,j-1],\D[i,j-2],\D[i,j-1]\}=D[i-1,j-1]$.
    Thus $\D[i-1,j] \geq \D[i-1,j-1]$ and $\D[i,j-1] \geq D[i-1,j-1]$.
  \end{enumerate}
  \qed
\end{proof}

We are ready to prove Theorem~\ref{theo:worst_case_changed}.

\begin{proof}
  For simplicity, we assume that the column indices of $\D$ begin with 0.
  In the grid graph $\mathcal{G}_{m,n}$ over $\D$,
  we assign a weight $(a_i - b_j)^2$ to each in-coming edge of cell $(i,j)$.
  We also consider the grid graph $\mathcal{G}_{m-1,n}$ over $\Dp$
  obtained by removing $(1,0), \ldots, (m,0)$ from $\mathcal{G}_{m,n}$.

  Consider a cell $(i,j)$ with $1<i<j$
  that is on the top row of some box $\B^{I,J}$,
  namely $i = \ITOP = Ik+1$ for some $1 \leq I \leq M-1$.
  By Lemma~\ref{lem:path_min},
  $p_1=(1,0),\dots,(i,i-1),\dots,(i,j)$ is the minimum weight path
  from $(1,0)$ to $(i,j)$.
  Similarly, $p'_1=(1,1),\dots,(i,i),\dots,(i,j)$
  is the minimum weight path from $(1,1)$ to $(i,j)$.

  For the cell $(i-1,j)$ that is the upper neighbor of $(i,j)$
  and is on the bottom row of box $\B^{I-1,J}$,
  by analogous arguments to the above,
  $p_2=(1,0),\dots,(i-1,i-2),\dots,(i-1,j)$ is the minimum weight path
  from $(1,0)$ to $(i-1,j)$,
  and $p'_2=(1,1),\dots,(i-1,i-1),\dots,(i-1,j)$ is the minimum weight path
  from $(1,1)$ to $(i-1,j)$.

  Let $p_3$ be the sub-path of $p_1$ and $p_2$ ending at $(i-1,i-2)$,
  $p_4$ the sub-path of $p'_1$ and $p'_2$ ending at $(i-1,i-1)$,
  $p_5$ the sub-path of $p_1$ and $p'_1$ from $(i,i)$ to $(i,j)$,
  and $p_6$ the sub-path of $p_2$ and $p'_2$ from $(i-1,i-1)$ to $(i-1,j)$.
  Let $e_1$ be the edge from $(i-1,i-2)$ to $(i,i-1)$,
  $e_2$ be the edge from $(i,i-1)$ to $(i,i)$,
  $e_3$ be the edge from $(i-1,i-2)$ to $(i-1,i-1)$,
  and $e_4$ be the edge from $(i-1,i-1)$ to $(i,i)$.
  See Figure~\ref{fig:lowerbound_analysis} that depicts these paths and edges.
  \begin{figure}[h]
    \centering
    \includegraphics[clip,width=70mm,bb=0 0 278 173]{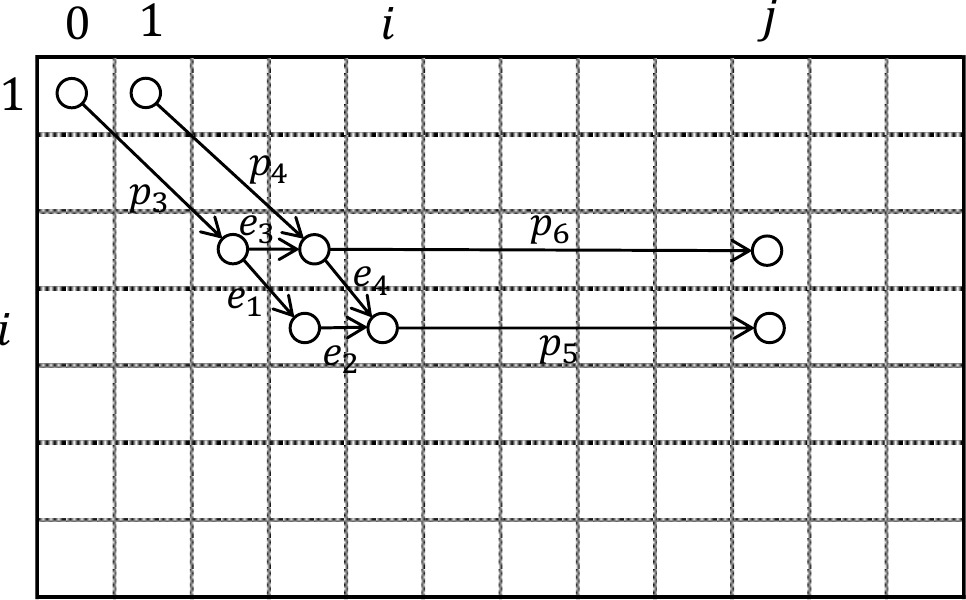}
    \caption{Minimum weight paths to $(i,j)$ and $(i-1,j)$ over $\D$ and $\Dp$.}
    \label{fig:lowerbound_analysis}
  \end{figure}

  For any path $p$ in $\mathcal{G}_{m,n}$,
  let $\cost(p)$ denote the total weights of edges in $p$.
  Now we have 
  $\cost(p_1)=\cost(p_3)+\cost(e_1)+\cost(e_2)+\cost(p_5)$ and 
  $\cost(p'_1)=\cost(p_4)+\cost(e_4)+\cost(p_5)$.
  By the definition of DTW,
  $\D[i,j]$ stores the cost of the minimum weight path
  from $(1,0)$ to $(i,j)$,
  and $\Dp[i,j]$ stores the cost of the minimum weight path
  from $(1,1)$ to $(i,j)$.
  Thus $\D[i,j] = \cost(p_3)+\cost(e_1)+\cost(e_2)+\cost(p_5)$
  and $\Dp[i,j] = \cost(p_4)+\cost(e_4)+\cost(p_5)$.
  Similarly, $\D[i-1,j]=\cost(p_3)+\cost(e_3)+\cost(p_6)$ and
  $\Dp[i-1,j]=\cost(p_4)+\cost(p_6)$.

  Now $\DS[i,j].U = \D[i-1,j]-\D[i,j]=\cost(e_3)+\cost(p_6)-(\cost(e_1)+\cost(e_2)+\cost(p_5))$,
  and $\DSp[i,j].U= \Dp[i-1,j]-\Dp[i,j]=\cost(p_6)-(\cost(e_4)+\cost(p_5))$.
  This leads to $\DS[i,j].U-\DSp[i,j].U=\cost(e_3)+\cost(e_4)-(\cost(e_1)+\cost(e_2))$.
  Recall that $\cost(e_2)=\cost(e_4) = (A[i] - B[j])^2$.
  Thus $\DS[i,j].U-\DSp[i,j].U=\cost(e_3)-\cost(e_1)$.
  Also, because $A[i] \neq A[i-1]$,
  $(A[i]-B[i-1])^2 = \cost(e_1) \neq \cost(e_3) = (A[i-1]-B[i-1])^2$,  
  Consequently, $\DS[i,j].U-\DSp[i,j].U \neq 0$.

  Therefore, for any cell $(i,j)$ with $1 < i < j$
  that lies on the top row of any character run in $A$,
  $\DS[i,j] \neq \DSp[i,j]$.
  Since each top row $\ITOP=Ik+1$~$(1 \leq I \leq M-1)$
  contains $n-(Ik+1)$ such cells,
  and since $n = kM$,
  there are $\sum_{I=1}^{M-1}{(n-(Ik+1))} =n(M-1)-k((M-1)M)/2-(M-1)  = (n-2)(M-1)/2$ such cells for top rows $I = 1, \ldots, M-1$.
  Symmetric arguments show that
  there are  $\sum_{J=1}^{N-1}{(m-(Jl+1))}=(m-2)(N-1)/2$ cells
  with $\DS[i,j] \neq \DSp[i,j]$ for left rows $J = 1, \ldots, N-1$.
  Thus, $\changed \geq ((n-2)(M-1)+(m-2)(N-1))/2 \in\Omega(mN+nM)$.
  \qed
\end{proof}

\subsection{Proof for Lemma~\ref{lem:updating_DP-table_worst-case}}

\begin{proof}
    The lemma can be shown in a similar manner to the proof for Theorem~\ref{theo:worst_case_changed} above.
  In so doing, we set $M = n$ and $N = n$ in the strings $A$ and $B$
  of Section~\ref{sec:evaluation},
  and consider strings $A$ and $B$ such that
  $A[i-1] > A[i]$ for $1 < i \leq m$
  and $B[j-1] < B[j]$ for $1 < j \leq n$, and $A[m] > B[n]$.
  Since deleting the leftmost character $B[1]$ of $B$ is symmetric to
  appending a new character $b_1$ to $B$ such that $b_1 < B[1]$,
  we get $\Omega(mn)$ lower bound for the number of cells
  where $D[i,j] \neq \Dp[i,j]$ per appended character.
  If we repeat this by recursively appending $k$ new characters $b_i$
  such that $b_{i} < b_{i-1} < \cdots < B[1]$ for $i = 2, \ldots, k$,
  we get $\Omega(mn)$ lower bound for the number of cells
  where $D[i,j] \neq \Dp[i,j]$ for each $b_i$.
  Hence there are a total of $\Theta(kmn)$ cells in $\Dp$ that differ from the
  corresponding cells in $\D$, for $k$ edit operations on $B$.
  \qed
\end{proof}



\end{document}